\newcommand{\Sp}[1]{\scriptstyle{(#1)}\displaystyle}
\newcommand{\Cat}[1]{\textbf{\underline{#1}}}
\newcommand{\Lind}[1]{\vphantom{)}_{#1}}
\newtheorem{Thm}{Theorem}[section]
\newtheorem{Lem}{Lemma}[section]
\newtheorem{Corr}{Corrollary}[section]
\newtheorem{Def}{Definition}[section]
\title{On Double Groups and the Poincaré group}
\author{Dany Majard}
\begin{document}
\maketitle

\begin{abstract}
 In \cite{Crane2007b}, Crane and Sheppard considered the structure of the Poincare group as a 2-Group, and derived important information about its representations in a 2-Category suited for representations of non-compact 2-groups, following a lead of \cite{Crane2007c}. In this paper, starting from the position that the most natural structure to describe cobordisms with corners, as in the recently published work of A. Voronov \cite{Feshbach2011}, is the cubical approach to higher category theory of Ehreshman, we explore some possibilities given by double groups to build TQFTs. Our main theorem is an extension of the work of \cite{Andruskiewitsch2009}, where we prove a theorem on the structure of maximally exclusive double groups. This result gives a presentation of the Poincare group where the distinction between boosts, rotations and translations is part of the structure, from which a TQFT could be build with space and spacetime transformations kept separate. This article drafts a program that will hopefully yield new state sum models of physical interest.
\end{abstract}
\tableofcontents

\section*{Introduction}
Since the seminal work \cite{springerlink:10.1007/BF02698547} of Atiyah, topological quantum field theories (TQFTs) have turned many non-mathematician's minds towards the categorical paradigm, contributing to the ever growing attention category theory has received. In addition, they put emphasis on the importance and significance of the monoidal structure a category might support. It slowly became apparent that what was dealt with was a simple case of a two dimensional category theory. Connections to knot theory appeared \cite{yetter2001functorial} and the link between category theory and geometry inspired and fueled the ranks of those who thought about climbing the categorical ladder proposed in \cite{Crane1994}, to what is now being labelled "Higher" category theory.\\
As the language of monoidal categories, braided monoidal categories, n-categories and eventually (n,r)-categories was formalized by  Street~\cite{gordon1995coherence,joyal1993braided,street1983enriched,street1996categorical}, Batanin \cite{batanin1996definition} and Baez/Dolan~\cite{baez1997introduction,Baez1995,baez1998categorification,baez2004higher,Baez2008} to cite a few, an idea of Ehreshmann~\cite{0162.32601} was mostly left behind. The idea is fairly simple and straightforward: considering cubical sets to replace graphs in the definition of categories. It can be done by a recursive process called internalization, which is presented in appendix A. It makes composition/pasting easy and it includes the globular sets as a degenerate case, though a rigorous proof is yet to appear in the weak setup. Amongst the authors that showed considerable interest in them are Brown~\cite{brown1997higher,brown1981algebra,brown1999double,brown1976double}, Paré and Dawson~\cite{R.Dawson-R.Pare1993}, Grandis~\cite{grandis1999limits,grandis2004adjoint} and more recently Andruskiewitsch and Natale  \cite{Andruskiewitsch2005}.
 These entities are called cubical categories, n-tuple categories or, as in  \cite{Feshbach2011}, n-fold categories. The simple way they "tile" allowed Brown to develop a non-abelian homotopy theory, replacing mappings of pointed $S^n$ to a topological space by mappings of n-cubes, leading to the higher homotopy Seifert-van Kampen Theorem ~\cite{brown2011nonabelian}. Andruskiewitsch and Natale found in the structure of double groupoids very interesting connections to the theory of extensions of groups \cite{Andruskiewitsch2005}, work that can be traced back to Lu and Weinstein \cite{lu1990poisson}.\\
Just as TQFTs paved the way to n-categories, they hinted towards the cubical case as well, take for example the work of  Kerler and Lyubashenko \cite{kerler2001non}, that used them to construct non-semisimple 3 dimensional TQFTs or the work of Morton \cite{morton2010cubical}. More recently Voronov and Feshbach used cubical categories to define extended TQFT on cobordisms with corners in all codimensions \cite{Feshbach2011}.\\
This article proposes some paths of research for cubical TQFTs by providing interesting examples of double groups, whose representation theory needs to be defined properly. As higher dimensional versions of groups embody some internal structures of groups, we hope to use them to keep these structures central to the models we build. This is not entirely new: in the last decade, interest has emerged to study 2-groups, or categorical groups. As these already showed interesting structures, Crane and Sheppard \cite{Crane2007b} described the Poincare group as a 2-group, and Crane worked with Yetter to build a proper category for their representations in \cite{Crane2007c}. In this paper we will take this idea further as we will show that a further decomposition of the Poincare group is embodied in a very particular double group and hence the representation theory will give richer behavior.\\
Another example follows the recent renewed interest in finite groups in theoretical Physics by the discovery of neutrino mixing and the definition of the "tribimaximal mixing" matrix \cite{HarrisonPhys.Lett.B530:1672002} that seems to govern the phenomenon. It is provided by finite subgroups of the 4-dimensional euclidian group. In \cite{Ma2007}, Ma shows how the tribimaximal matrix would emerge from a symmetry corresponding to even permutations of 4 elements, i.e. the group $A_4$, or possibly a bigger finite group sharing some common properties. His article opened the door to finite symmetries in the quest for a theory of everything and although the relevance of the example is far from understood, we think that it is worth mentioning. The fact is that in the 4 dimensional case, finite subgroups of the rotation group correspond to core diagrams of double groups. Working out this case might lead us to draw a bridge between TQFTs and Ma's theory.\\\\
The first section is a reminder of the definition of double categories. The reader who is already familiar with them may skip it and refer to it for notation. The second section is a reminder of the structure of groupoids. The formalism developed there is necessary to the understanding of double groupoids and therefore double groups. The third section is a study of the structure of double groupoids, mostly following results from Andruskiewitsch, culminating in a theorem relating a new class of double groups to certain decompositions of groups. The fourth and fifth section will use these results to put into new light the Poincare group and the finite subgroups of SO(4).\vspace{5mm}

\textbf{Acknowledgements :} The author would like to thank my PhD thesis advisor Louis Crane and David Yetter for their help and support, and B. Bischof for proof reading. This paper would not exist otherwise.

													        \section*{Notation}
\begin{itemize}
\item The notation \Cat{C}(a,b) will be used for the homset of the category \Cat{C} from an object a to an object b. When $a=b$ we will reduce the notation to $\Cat{C}(a)$ and we will write $\Cat{C}[a]$ for the group of invertible endomorphisms.
\item Composition will be written in diagramatic order, i.e. $fg$ will mean f first and then g, which is more classically written $g\circ f$.
\item For a category \Cat{C}, the set of objects will be denoted $\Cat{C}_0$. 
\item Identities will be drawn as segments, i.e. arrows without heads.
\item Let $f,g:p\to b$ be a cone over $t,s:b\to a$ and $\pi_b,\Lind{b}\pi:b_2\to\,b$ the pullback of $(t,s)$. Then $f\,\ulcorner g$ is the unique morphism $p\to b_2$ given by the universality of the pullback, i.e :
		\begin{align*}
		  (f\,\ulcorner g)\Lind{b}\pi &= f\\
		  (f\,\ulcorner g)\pi_b&=g
		\end{align*}
	Diagrammatically it give the following picture :
	\begin{figure}[!hbtp]
		\centering
		\input{./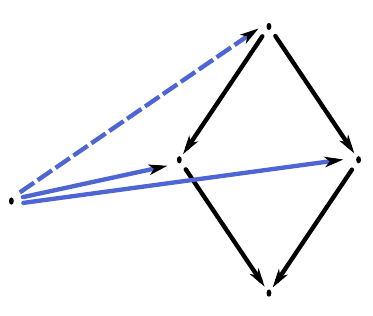_tex}
	\end{figure}\\
	$b_2$ can also be written $b\,\Lind{t}\!\times_s b$
\end{itemize}
													        \section{Double categories}
\begin{Def}A \textbf{double category} is a set $(O,H,V,S,s_{1,h},t_{1,h},s_{1,v},t_{1.v},s_{2,h},\\t_{2,h},s_{2,v},t_{2,v},\imath_{1,h},\imath_{1,v},\imath_{2,h},\imath_{2,v},\circ_{1,h},\circ_{1,v},\circ_{2,h},\circ_{2,v})$ where :
\begin{itemize}
	\item The sets O,H,V,S are respectively objects, horizontal arrows, vertical arrows and squares.
	\item The following maps of sets are the source and target maps :
		\begin{align*}
			s_{1,h},t_{1,h}\,:\,&H\to O
			&s_{2,h},t_{2,h}\,:\,&S\to V\\
                                s_{1,v},t_{1.v}\,:\,&V\to O
			&s_{2,v},t_{2,v}\,:\,&S\to H
		\end{align*}
	\item The following maps of sets  are the identity maps :
		\begin{align*}
			\imath_{1,h}&:O\to H
			&\imath_{2,h}&:V\to S\\
			\imath_{1,v}&:O\to V
			&\imath_{2,v}&:H\to S
		\end{align*}
	\item The following maps of sets are the composition maps.
		\begin{align*}
			\circ_{1,h}&:H\times_O H\to H
			&\circ_{2,h}&:S\times_V S\to S\\
			\circ_{1,v}&:V\times_O V\to V
			&\circ_{2,v}&:S\times_H S\to S
		\end{align*}
	\item The following holds :
	\begin{itemize}
		\item Sources and targets are compatible the following way :
			\begin{align*}
				s_{2,h}s_{1,v}&=s_{2,v}s_{1,h}
				&s_{2,h}t_{1,v}&=t_{2,v}s_{1,h}\\
				t_{2,h}s_{1,v}&=s_{2,v}t_{1,h}
				&t_{2,h}t_{1,v}&=t_{2,v}t_{1,h}
			\end{align*}
		\item $\imath_{i,\alpha}$ is an identity for the composition $\circ_{i,\alpha}$.
		\item the compositions  $\circ_{i,\alpha}$ are associative.
		\item  $\circ_{2,v}\circ_{2,h}=\circ_{2,h}\circ_{2,v}$ whenever it makes sense.
	\end{itemize}
\end{itemize}
\end{Def}
\clearpage
A general element of a double category looks like a square:
		\begin{figure}[!hbtp]
		\centering
		\input{./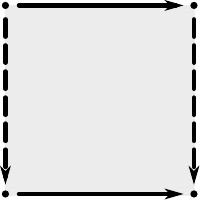_tex}
	\end{figure}\\
	It has vertical and horizontal sources and targets -the sides- which themselves have sources and targets, the corners. It also has horizontal and vertical identities and compositions. Composing is pictorially equivalent to pasting squares in one direction or the other. The above picture depicts the two different types of arrows with plain and dotted lines. It is meant to emphasize the fact that they are intrinsically different and do not compose with each other. It should be noted that the vertical source of a square is a horizontal arrow and vice-versa. Accordingly, vertical identities are identities on horizontal arrows. The last line of the definition is called the "interchange law". It ensures that any assortment of the sort :
	\begin{figure}[!hbtp]
		\centering
		\input{./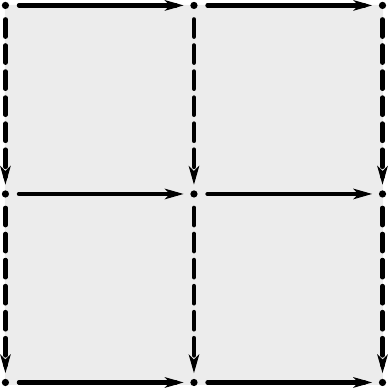_tex}
	\end{figure}\\
	yields the same square regardless of how it is composed. In a paper \cite{R.Dawson-R.Pare1993} published in 1993, Dawson \& Paré gave general conditions for the existance of composition for general ``tilings'', or arrangements of squares  and it is a good source for what happens in double categories. A consequence of the interchange law is that identities of identities are unique, i.e.:
\begin{align*}
	\imath_{1,v}\imath_{2,h}&=\imath_{1,h}\imath_{2,v}
\end{align*}
  There are a few equivalent definitions of double categories, including the original "squares only" definition of Ehresmann \cite{0162.32601}, while the most elegant is that they are internal categories in the category of categories. The above definition has the advantage of being more accessible for most readers.

	\begin{Def} A \textbf{double functor} between double categories is a map of sets sending objects to objects, arrows to arrows and squares to squares while respecting source, target, identities and composition.
	\end{Def}

	They are the two dimensional equivalent of functors of categories. Let's visualize what a functor $F$ would do :
	\begin{figure}[!hbtp]
		\centering
		\input{./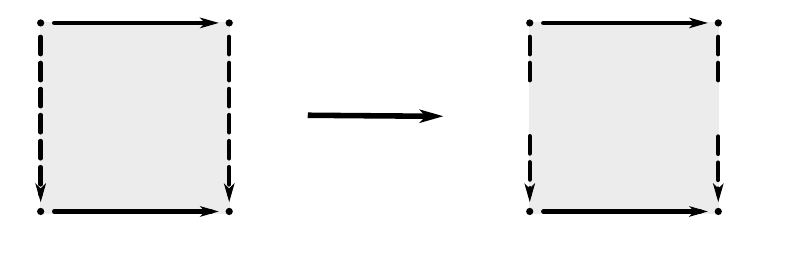_tex}
	\end{figure}	

A double functor reduces to a functor on the horizontal boundary category and the vertical boundary category. They compose in a unique way just as usual functors compose. This composition is associative and has identities.

	\begin{Def} Let F and G be double functors : $C\to D$, a \textbf{horizontal double natural transformation} from F to G is a map of sets that sends vertical arrows of C to squares in D in such a way that it intertwines these functors horizontally.
	\end{Def}
	Let's visualize a horizontal double natural transformation $\omega:F\to G$ :
	\begin{figure}[!hbtp]
		\centering
		\input{./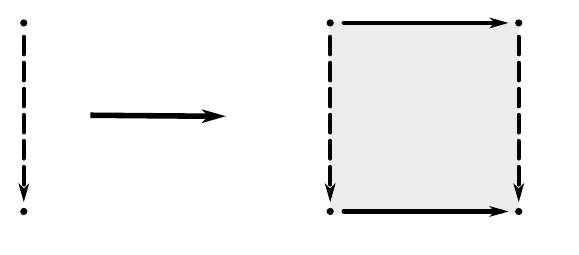_tex}
	\end{figure}\\
	The intertwining is represented by the following identity :
		\begin{figure}[!hbtp]
		\centering
		\input{./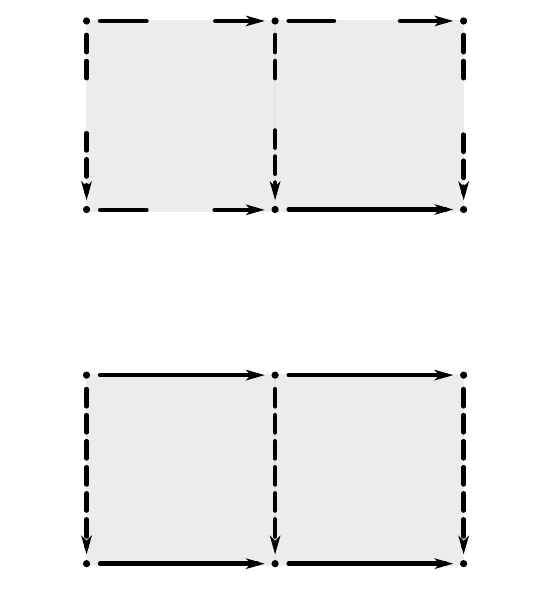_tex}
	\end{figure}
	\begin{Def}  Let F and G be double functors : $C\to D$, a \textbf{vertical double natural transformation} from F to G is a map of sets that sends horizontal arrows of C to squares in D in such a way that it intertwines these functors vertically.
	\end{Def}
	Let's visualize a vertical double natural transformation $\omega:F\to G$ :
	\begin{figure}[!hbtp]
		\centering
		\input{./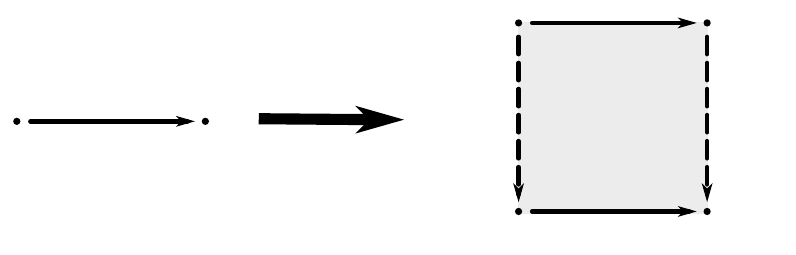_tex}
	\end{figure}\\
	The intertwining this time being vertical, it is represented by the following identity :
	\begin{figure}[!hbtp]
		\centering
		\input{./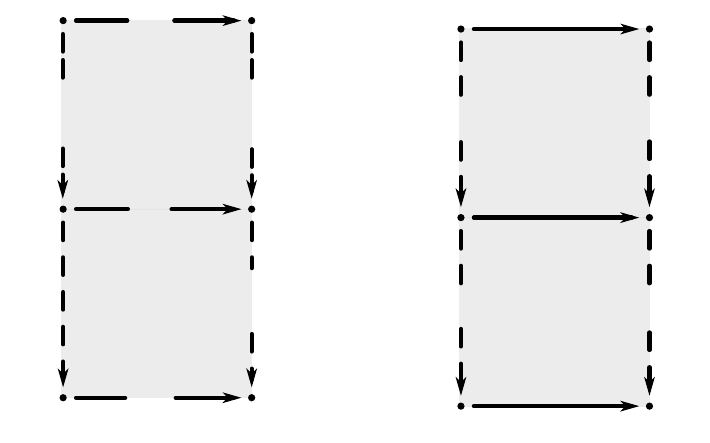_tex}
	\end{figure}\\
	From their intertwining properties it is clear that these double natural transformations compose in two ways and 
	that this composition is associative and unital. Being given horizontal and 
	vertical double natural transformations one could look for an entity intertwining them. Such an 
	entity exists and has been given different names throughout the literature, the most common being comparison and 
	modification.
	\begin{Def} Given four double functors $F,G,H,J:C\to D$, two horizontal double n.t. $\omega:F\to G$ and $\Omega:H\to 
	J$, and two vertical double n.t. $\delta:G\to H$ and $\Delta:F\to J$, a \textbf{double comparison} is a map of sets that sends objects of C to squares in D in such a way that it intertwines all the above entities in a plane.
	\end{Def}
	Let's visualize what such a double comparison does :
		\begin{figure}[!hbtp]
  		\centering
		  \input{./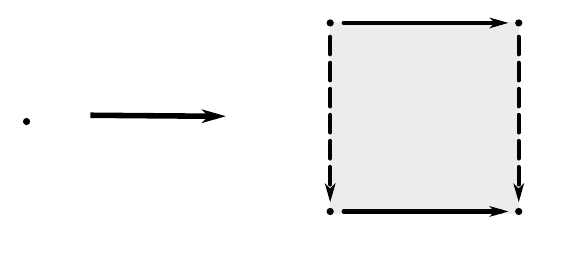_tex}
	  \end{figure}\\
	Intertwining on a plane means that for a given square in \Cat{C} the four compositions below are equal :\clearpage
		\begin{figure}[!hbtp]
  		\centering
		  \input{./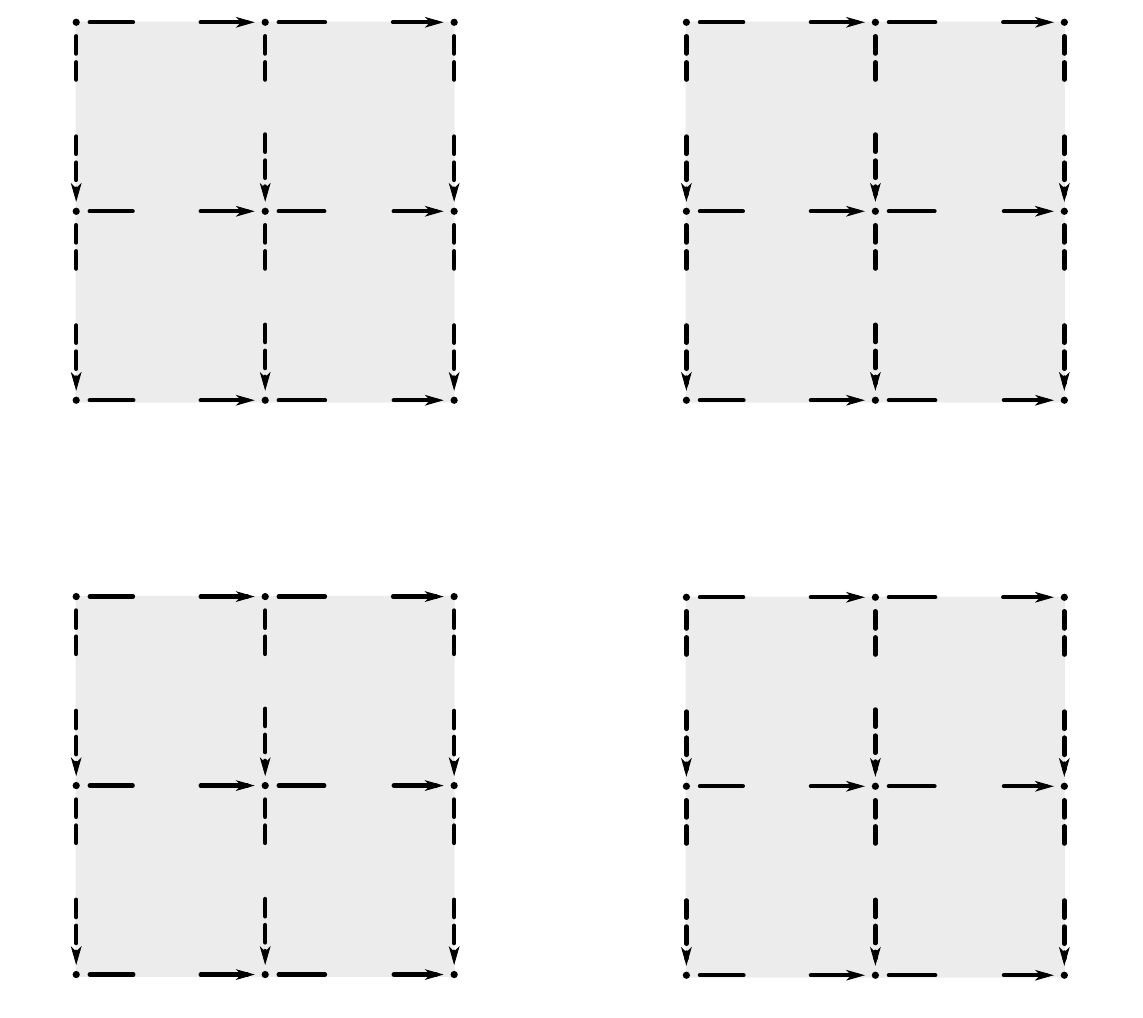_tex}
	  \end{figure}
												   	 \section{Structure of groupoids}
In this setion we present the structure of groupoids, as we developped it. It is in this spirit that we will approach higher groupoids as well.

\begin{Def}
	A \textbf{groupoid} is a category where all morphisms are invertible. It is \textbf{connected} if every homset is nonempty. It is \textbf{discrete} if all morphisms are identities.
\end{Def}
 With regards to this definition a group is a groupoid with one object and a group bundle is a totally disconnected groupoid.

\begin{Lem}Let \Cat{G} be a groupoid, then it is the disjoint union of connected groupoids
\end{Lem}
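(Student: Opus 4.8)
The plan is to partition the objects of \Cat{G} by the relation of being joined by a morphism, check that this is an equivalence relation using the groupoid axioms, and then exhibit the full subgroupoids on each class as the connected components, with no morphisms bridging distinct classes.

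First I would define a relation $\sim$ on $\Cat{G}_0$ by declaring $a\sim b$ whenever $\Cat{G}(a,b)\neq\emptyset$. Reflexivity is witnessed by the identity at each object; symmetry uses the defining property of a groupoid, namely that every $f\in\Cat{G}(a,b)$ has an inverse $f^{-1}\in\Cat{G}(b,a)$; and transitivity uses composability, since $f\in\Cat{G}(a,b)$ and $g\in\Cat{G}(b,c)$ produce $fg\in\Cat{G}(a,c)$ in diagrammatic order. Hence $\sim$ is an equivalence relation and its classes partition $\Cat{G}_0$.

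Next, for each class $[a]$ I would let $\Cat{G}_{[a]}$ be the full subcategory of \Cat{G} on the objects of $[a]$: its objects are the elements of $[a]$ and its morphisms are all arrows of \Cat{G} whose source and target lie in $[a]$. Each $\Cat{G}_{[a]}$ inherits associativity and identities from \Cat{G}, and it inherits invertibility (the inverse of a morphism internal to $[a]$ again has both source and target in $[a]$), so it is a groupoid; it is connected precisely because $x\sim y$ for all $x,y\in[a]$ forces every homset $\Cat{G}(x,y)$ to be nonempty.

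The crux — the only step that is more than bookkeeping — is to verify that \Cat{G} is the disjoint union (coproduct of categories) of the $\Cat{G}_{[a]}$, i.e. that no morphism crosses between distinct classes. This is exactly where the groupoid hypothesis does its work: if $f\in\Cat{G}(x,y)$ then by definition $x\sim y$, so $x$ and $y$ lie in the same class and $f$ belongs to one and only one $\Cat{G}_{[a]}$. Thus every object and every morphism of \Cat{G} is accounted for exactly once, giving $\Cat{G}=\bigsqcup_{[a]}\Cat{G}_{[a]}$ as a disjoint union of connected groupoids. I expect the ``no cross-class morphism'' observation to be the conceptual heart, while the equivalence-relation and subgroupoid verifications are routine.
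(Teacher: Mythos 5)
Your proof is correct and follows essentially the same approach as the paper: partition the objects by the reachability relation and observe that composition (together with identities and inverses) prevents any morphism from crossing between classes. The paper records only the key no-crossing observation, whereas you spell out the full equivalence-relation and subgroupoid bookkeeping, but the underlying argument is identical.
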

\begin{proof}Let \Cat{G} be a groupoid and suppose that $\Cat{G}(a,z) =\emptyset$. Then $\Cat{G}(a,b)\neq\emptyset$ implies $\Cat{G}(b,z)=\emptyset$, otherwise composition gives an arrow $a\to z$, giving a contradiction.
\end{proof}

\begin{Def}
	Let $X$ be a set, then the \textbf{coarse} groupoid \Cat{$\square X$} on X is the groupoid with X as objects and $\Cat{$\square X$}(a,b)=\{\vec{ab}\}$, i.e. every possible homset has a unique element.\\
	Let G be a groupoid, and $\Pi:\,\Cat{G}\to\square \Cat{G}_0$, be defined by 
	\begin{align*}
		\Pi(f)=\vec{ab}\qquad for\,f\in \Cat{G}(a,b)
	\end{align*}
	Its image is called the \textbf{frame} \Cat{$\blacksquare$G} of \Cat{G}. A groupoid is \textbf{slim} if it is isomorphic to its frame. The \textbf{core} \Cat{G$_\bullet$} of G is the subgroupoid of \Cat{G} consisting of arrows whose source and target are identical.
\end{Def}

In summary, the projection $\Pi$ collapses Homsets to only keep source and target, in other words it is a bundle :
\begin{figure}[!hbtp]
	 \centering
	 \input{./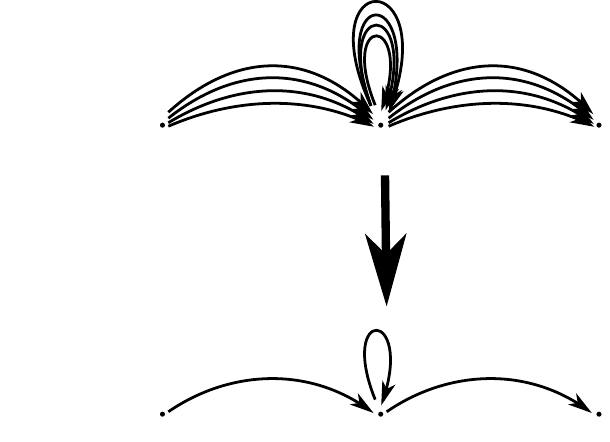_tex}
\end{figure}\\
The core of G, on the other hand, retains the information on what the "fibers" look like.

\begin{Lem}Let \Cat{G} be a connected groupoid, then its core is a principal bundle and its frame is coarse.
\end{Lem}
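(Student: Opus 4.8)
The plan is to treat the two assertions separately, the frame being immediate and the core carrying the real content. First I would dispatch the frame by unwinding the definition of $\Pi\colon\Cat{G}\to\square\Cat{G}_0$. By construction $\blacksquare\Cat{G}(a,b)\subseteq\square\Cat{G}_0(a,b)=\{\vec{ab}\}$ for every pair of objects. Connectedness says each $\Cat{G}(a,b)$ is nonempty, so choosing any $f\in\Cat{G}(a,b)$ yields $\vec{ab}=\Pi(f)\in\blacksquare\Cat{G}(a,b)$; hence $\blacksquare\Cat{G}(a,b)=\{\vec{ab}\}$ for all $a,b$, which is exactly the statement that the frame is the coarse groupoid on $\Cat{G}_0$.

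For the core I would first record that $\Cat{G}_\bullet=\bigsqcup_{a\in\Cat{G}_0}\Cat{G}[a]$ is a group bundle, i.e.\ a totally disconnected subgroupoid, with the common source-target map $\pi\colon\Cat{G}_\bullet\to\Cat{G}_0$ as projection. To exhibit it as a principal bundle I fix a basepoint $a_0$, set $G:=\Cat{G}[a_0]$, and use connectedness to pick for each object $a$ a connecting arrow $\gamma_a\in\Cat{G}(a_0,a)$, taking $\gamma_{a_0}$ to be the identity at $a_0$. In diagrammatic order $c_{\gamma_a}(x):=\gamma_a^{-1}\,x\,\gamma_a$ runs $a\to a_0\to a_0\to a$ and defines a group isomorphism $G\xrightarrow{\sim}\Cat{G}[a]$, since the inner copy of $\gamma_a\gamma_a^{-1}$ cancels across a product. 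Assembling the $c_{\gamma_a}$ gives a bijection $\Cat{G}_0\times G\to\Cat{G}_\bullet$ over $\Cat{G}_0$, trivializing the bundle; transporting the free and transitive left-multiplication action of $\Cat{G}[a]$ on itself through $c_{\gamma_a}$ makes each fiber a $G$-torsor, so $\pi$ is a principal $G$-bundle.

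The hard part will be making clear that this principal structure does not depend on the arbitrary choices of $\gamma_a$. A second choice $\gamma_a'$ differs from $\gamma_a$ by a unique $u\in G$ with $\gamma_a'=u\gamma_a$, and a short computation gives $c_{\gamma_a'}=c_{\gamma_a}\circ c_u$ with $c_u(x)=u^{-1}xu$ an inner automorphism of $G$; thus any two trivializations differ by an element of $\mathrm{Inn}(G)$ and the torsor structure on each fiber is canonical. I expect the only genuine obstacle to be pinning down the intended meaning of ``principal bundle'' over the discrete base $\Cat{G}_0$ and verifying that $c_{\gamma_a}$ respects the diagrammatic composition convention; once the action is transported correctly, freeness and transitivity follow at once from the group structure on the vertex groups.
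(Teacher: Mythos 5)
Your proposal is correct and follows essentially the same route as the paper: conjugation by a connecting arrow $f\in\Cat{G}(a,b)$ identifies the vertex groups, and connectedness forces every homset of the frame to be the singleton $\{\vec{ab}\}$. You go further than the paper in fixing a basepoint, transporting the torsor structure explicitly, and checking independence of the choice of connecting arrows up to inner automorphisms --- details the paper's one-line argument leaves implicit --- but the underlying idea is identical.
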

\begin{proof}Let $a,b\in\Cat{G}_0$, and pick $f\in\Cat{G}(a,b)$. Then $f^{-1}\circ(\_)\circ f :\Cat{G}(a)\to\Cat{G}(b)$ is an isomorphism of groups whose inverse is $f\circ(\_)\circ f^{-1}$. Since \Cat{G} is connected, $\Cat{G}(a)\simeq\Cat{G}(b)\quad\forall a,b\in\Cat{G}_0$,oving that its core is a principal bundle. Moreover, since $\Cat{G}(a,b)\neq\emptyset\,\forall a,b\in\Cat{G}_0$, the projection is on its frame is onto and full, i.e $\blacksquare \Cat{G}=\square\Cat{G}$.
\end{proof}


\begin{Thm}
	Let \Cat{G} be a groupoid, then every  section \footnote{graph morphism $!:\blacksquare G\to G$ that is the identity on objects} of $\Pi:\Cat{G}\to\blacksquare \Cat{G}$ in \Cat{Graph} defines a unique isomorphism of graphs:
	\begin{align*}
		\Phi:\,G_\bullet\,\Lind{t}\!\times_s\blacksquare \Cat{G}\to \Cat{G}
	\end{align*}
	Moreover there exists a unique groupoid structure on $G_\bullet\,\Lind{t}\times_s\blacksquare G$ that makes $\Phi$ 
	an isomorphism of groupoids.
\end{Thm}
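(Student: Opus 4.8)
The plan is to construct $\Phi$ explicitly from the section $!$, show by a cancellation argument that it is a bijection on arrows, and then obtain the groupoid structure by transport along this bijection.

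First I would describe the domain concretely. An arrow of $G_\bullet\,\Lind{t}\!\times_s\blacksquare \Cat{G}$ is a pair $(g,\vec{ab})$ where $g\in\Cat{G}(a)$ is an automorphism and $\vec{ab}$ is a frame arrow with $s(\vec{ab})=a=t(g)$; inside the pullback graph it has source $s(g)=a$ and target $t(\vec{ab})=b$. I define
\[
\Phi(g,\vec{ab})=g\,!(\vec{ab}),
\]
the diagrammatic composite $a\xrightarrow{g}a\xrightarrow{!(\vec{ab})}b$. This lands in $\Cat{G}(a,b)$, is the identity on objects, and visibly preserves source and target, so $\Phi$ is a morphism in \Cat{Graph}. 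Since $\Pi$ only records source and target, $\Pi\Phi(g,\vec{ab})=\vec{ab}$, i.e. $\Phi$ lies over $\blacksquare \Cat{G}$.

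Next I would prove $\Phi$ is invertible. For injectivity, if $\Phi(g,\vec{ab})=\Phi(g',\vec{a'b'})$ then the two arrows share source and target, so $\vec{ab}=\vec{a'b'}$ and $g\,!(\vec{ab})=g'\,!(\vec{ab})$; because $!(\vec{ab})$ is invertible in the groupoid, cancelling it gives $g=g'$. For surjectivity, given $h\in\Cat{G}(a,b)$ put $\vec{ab}=\Pi(h)$ and $g=h\,!(\vec{ab})^{-1}\in\Cat{G}(a)$; then $(g,\vec{ab})$ lies in the pullback and $\Phi(g,\vec{ab})=h$. Hence $\Phi$ is a graph isomorphism. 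Its uniqueness is forced by the two structural constraints it satisfies, namely $\Pi\Phi=\mathrm{pr}_{\blacksquare \Cat{G}}$ together with $\Phi(g,\vec{ab})\,!(\vec{ab})^{-1}=g$: any graph isomorphism over $\blacksquare \Cat{G}$ whose core coordinate relative to the chosen representatives is $g$ must be given by the displayed formula.

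Finally, since $\Phi$ is a bijection that is the identity on objects, there is exactly one groupoid structure on $G_\bullet\,\Lind{t}\!\times_s\blacksquare \Cat{G}$ making $\Phi$ an isomorphism of groupoids: composition, identities and inverses are defined by $\alpha\bullet\beta=\Phi^{-1}\big(\Phi(\alpha)\,\Phi(\beta)\big)$ and the analogous formulas, and any structure turning $\Phi$ into a functor must satisfy these equalities, giving both existence and uniqueness. Unwinding the transported composition is where the genuine content lies: writing ${g'}^{\,\vec{ab}}=!(\vec{ab})\,g'\,!(\vec{ab})^{-1}$ for the conjugation of $g'\in\Cat{G}(b)$ into $\Cat{G}(a)$ and $\tau(\vec{ab},\vec{bc})=!(\vec{ab})\,!(\vec{bc})\,!(\vec{ac})^{-1}\in\Cat{G}(a)$ for the cocycle measuring the failure of $!$ to be a functor, one computes
\[
(g,\vec{ab})\bullet(g',\vec{bc})=\big(g\,\cdot\,{g'}^{\,\vec{ab}}\,\cdot\,\tau(\vec{ab},\vec{bc}),\ \vec{ac}\big).
\]
The main obstacle is therefore not the bijection but the bookkeeping that this transported operation is genuinely internal to the pullback and fibred over $\blacksquare \Cat{G}$: that the left entry lands in $\Cat{G}(a)$, that the frame component composes as $\vec{ab}\,\vec{bc}=\vec{ac}$, and that associativity together with the unit and inverse laws hold (equivalently, that $\tau$ satisfies the evident groupoid cocycle identity). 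All of these follow formally from transport along $\Phi$, but they are exactly the data exhibiting \Cat{G} as a cocycle extension of its frame by its core, which is the structural payoff of the theorem.
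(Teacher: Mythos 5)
Your proof is correct and follows essentially the same route as the paper: both use the unique factorization $f=u_f\,!(\vec{ab})$ with $u_f=f\,(!(\vec{ab}))^{-1}$ to exhibit the graph isomorphism and then transport the composition, identity and inverses along it. The explicit formula you derive for the transported composition (with the conjugation action and the cocycle $\tau$) is not needed for this theorem itself; the paper defers exactly that computation to the subsequent lemma, where $\tau$ appears as $\phi_{\vec{ab},\vec{bc}}$ and your conjugation as $\vec{ab}\neg(\,\cdot\,)$.
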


\begin{proof}Assume a section $!:\blacksquare \Cat{G}\to \Cat{G}$ in \Cat{Graph}, then for $f\in \Cat{G}(a,b)$,  there exists a unique $u_f\in \Cat{G}(a,a)$ defined by $u_f:= f (!(\vec{ab}))^{-1}$ such that $f=u_f !(\vec{ab})$. \clearpage
\begin{figure}[!hbtp]
	 \centering
	 \input{./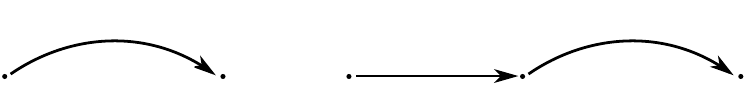_tex}
\end{figure}
This shows that $\Phi(f):=(u_f,\,\Vec{ab})$ defines an isomorphism of graphs and since it is the identity on objects, it lifts to an isomorphism on the pullback $\Cat{G}\,\Lind{t}\times_s \Cat{G}$ of the target and source map. Then the multiplication $m$ on \Cat{G} defines a multiplication $\#$ on $G_\bullet\,\Lind{t}\times_s\blacksquare G$ by $(\Phi\ulcorner\Phi)^{-1}m\Phi$. Similarly it defines an identity $\imath'$ by $\imath\Phi$. The picture is the following :
\begin{figure}[!hbtp]
	  \centering
	  \input{./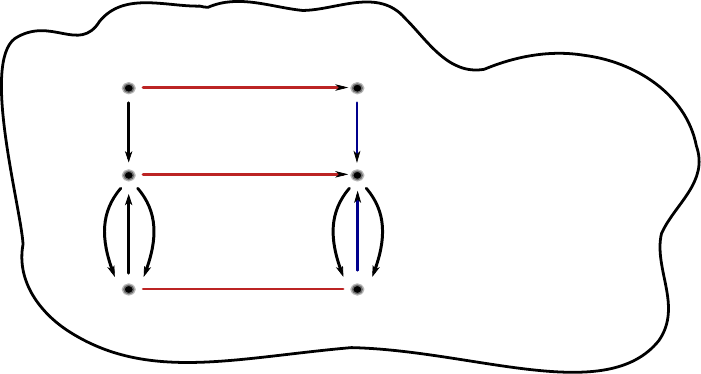_tex}
\end{figure}\\
The axioms for associativity and unit follow from $\Phi$ being an isomorphism of graphs.
\end{proof}

\begin{Corr}
	Let \Cat{G} be a connected groupoid with a chosen fiber bundle structure on its core with fiber G. Then every  section \footnote{graph morphism $!:\blacksquare G\to G$ that is the identity on objects} of $\Pi:\Cat{G}\to\square \Cat{G}_0$ in \Cat{Graph} defines a unique isomorphism of graphs:
	\begin{align*}
		\Phi:\,G\times\square \Cat{G}_0\to \Cat{G}
	\end{align*}
	Moreover there exists a unique groupoid structure on $G\times\square \Cat{G}_0$ that makes $\Phi$ 
	an isomorphism of groupoids.
\end{Corr}

Define the following notation : $u_{\vec{ab}}:=(u,\vec{ab})\in G_\bullet\,\Lind{t}\!\times_s\blacksquare \Cat{G}$.

\begin{Lem}The above defined composition is given as :
\begin{align*}
		u_{\vec{ab}}\#v_{\vec{bc}}=\Bigl(u\bigl(\vec{ab}\neg v\bigr) \phi_{\vec{ab},\vec{bc}}\Bigr)_{\vec{ac}}
\end{align*}
by two maps :
\begin{align*}
	\phi:\blacksquare \Cat{G}_0\,\Lind{t}\!\times_s \blacksquare \Cat{G}_0 \to G_\bullet
	\qquad\neg:\blacksquare \Cat{G}_0\,\Lind{t}\!\times_s G_\bullet \to G_\bullet
\end{align*}
such that $\vec{ab}\neg :\Cat{G}(b)\to\Cat{G}(a)$ is a group isomorphism and that the following identities are satisfied:
\begin{align*}
   s(\phi_{\vec{ab},\vec{bc}})&=a&
  \vec{ab}\neg (\vec{bc}\neg u)&=
  	\phi_{\vec{ab},\vec{bc}}(\vec{ac}\neg u)(\phi_{\vec{ab},\vec{bc}})^{-1}\\
 \phi_{\vec{aa},\vec{ab}}&=\phi_{\vec{aa},\vec{aa}}&
  (\vec{ab}\neg \phi_{\vec{bc},\vec{cd}})&=\phi_{\vec{ab},\vec{bc}}\phi_{\vec{ac},\vec{cd}}(\phi_{\vec{ab},\vec{bd}})^{-1}
\end{align*}
Moreover the identity is given by $\imath'(a)=((\phi_{\vec{aa},\vec{aa}})^{-1})_{\vec{aa}}$.
\end{Lem}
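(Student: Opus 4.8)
The plan is to realize the two structure maps explicitly through the chosen section $!$ and then reduce every assertion to a telescoping cancellation of matched pairs $!(\vec{ab})^{-1}!(\vec{ab})$. Reading all composites in diagrammatic order, I would set
\[
\vec{ab}\neg v := !(\vec{ab})\,v\,!(\vec{ab})^{-1},
\qquad
\phi_{\vec{ab},\vec{bc}} := !(\vec{ab})\,!(\vec{bc})\,!(\vec{ac})^{-1},
\]
for $v\in\Cat{G}(b)$. Since $!(\vec{ab}):a\to b$ is invertible, both expressions land in $\Cat{G}(a)$, which immediately gives $s(\phi_{\vec{ab},\vec{bc}})=a$ and shows that $\vec{ab}\neg$ is conjugation by a fixed invertible arrow, hence a group isomorphism $\Cat{G}(b)\to\Cat{G}(a)$ whose inverse is conjugation by $!(\vec{ab})^{-1}$.

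Next I would establish the product formula from the definition $\#=(\Phi\ulcorner\Phi)^{-1}m\Phi$. By the Theorem, $\Phi^{-1}(u_{\vec{ab}})=u\,!(\vec{ab})$, so the underlying arrow of $\Phi^{-1}(u_{\vec{ab}})\,\Phi^{-1}(v_{\vec{bc}})$ is $u\,!(\vec{ab})\,v\,!(\vec{bc})$, an arrow $a\to c$. Applying $\Phi$ strips off $!(\vec{ac})$ on the right, giving core component $u\,!(\vec{ab})\,v\,!(\vec{bc})\,!(\vec{ac})^{-1}$; inserting the identity $!(\vec{ab})^{-1}!(\vec{ab})$ just after $u$ factors this as $u\,(\vec{ab}\neg v)\,\phi_{\vec{ab},\vec{bc}}$, paired with $\vec{ac}$, which is exactly the claimed expression.

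The three remaining relations are then direct computations. Substituting the definitions, the upper right identity $\vec{ab}\neg(\vec{bc}\neg u)=\phi_{\vec{ab},\vec{bc}}(\vec{ac}\neg u)(\phi_{\vec{ab},\vec{bc}})^{-1}$ and the lower right identity both collapse, after cancelling the matched pairs $!(\vec{ac})^{-1}!(\vec{ac})$ and $!(\vec{ad})^{-1}!(\vec{ad})$, to the common values $!(\vec{ab})!(\vec{bc})u\,!(\vec{bc})^{-1}!(\vec{ab})^{-1}$, respectively $!(\vec{ab})!(\vec{bc})!(\vec{cd})!(\vec{bd})^{-1}!(\vec{ab})^{-1}$. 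The left identity $\phi_{\vec{aa},\vec{ab}}=\phi_{\vec{aa},\vec{aa}}$ is immediate, as both sides reduce to $!(\vec{aa})$. Finally the transported identity is $\imath'(a)=\Phi(\imath(a))=\Phi(\mathrm{id}_a)=(!(\vec{aa})^{-1})_{\vec{aa}}$, and since $\phi_{\vec{aa},\vec{aa}}=!(\vec{aa})$ this equals $((\phi_{\vec{aa},\vec{aa}})^{-1})_{\vec{aa}}$.

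The one genuine subtlety, and the source of every nontrivial term, is that $!$ is only a morphism of graphs and not a functor: in general $!(\vec{aa})$ is not the identity of $\Cat{G}(a)$ and $!(\vec{ab})!(\vec{bc})\neq!(\vec{ac})$, the failure being measured precisely by $\phi$. The computations are otherwise routine bookkeeping, so the main thing to guard against is silently assuming functoriality of $!$ (which would trivialize $\phi$) while keeping the diagrammatic composition order consistent throughout. The identity-element computation and the relation $\phi_{\vec{aa},\vec{ab}}=\phi_{\vec{aa},\vec{aa}}$ are exactly the places where this non-functoriality becomes visible.
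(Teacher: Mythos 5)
Your proposal is correct and follows essentially the same route as the paper: you extract the same explicit formulas $\vec{ab}\neg v = !(\vec{ab})\,v\,!(\vec{ab})^{-1}$ and $\phi_{\vec{ab},\vec{bc}} = !(\vec{ab})\,!(\vec{bc})\,!(\vec{ac})^{-1}$ that the paper encodes diagrammatically, and the cancellations you perform are precisely the algebraic content of the paper's associativity pictures and of the observation $\phi_{\vec{aa},\vec{ab}}=!(\vec{aa})$. The only cosmetic difference is that the paper defines $\phi$ and $\neg$ implicitly via $\#$ and then reads off the relations from associativity, whereas you posit the formulas and verify everything directly; your closing caution about the non-functoriality of $!$ is exactly the right point to flag.
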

\clearpage
\begin{proof}$\phi$ is defined by : $e_{\vec{ab}}\#e_{\vec{bc}}=:(\phi_{\vec{ab},\vec{bc}})_{\vec{ac}}$,
or in pictures :
\begin{figure}[!hbtp]
	  \centering
	  \input{./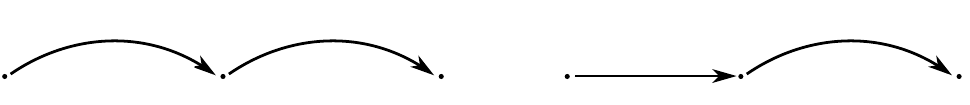_tex}
\end{figure}\\
It encodes how much the section fails to be a section of groupoids.\\
$\neg$ is defined by : 
$u_{\vec{ab}}\#v_{\vec{bc}}=:\bigl(u(\vec{ab}\neg v)\bigr)_{\vec{ab}}\#e_{\vec{bc}}$, or in pictures :
\begin{figure}[!hbtp]
	  \centering
	  \input{./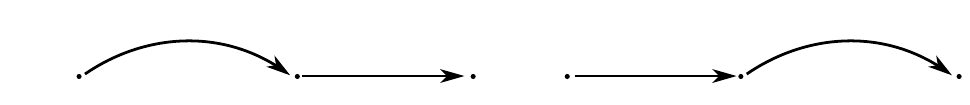_tex}
\end{figure}\\
or equivalently : $\vec{ab}\neg u:=(!\vec{ab})\, u\, (!\vec{ab})^{-1}$.
So that indeed, the above composition is given by $\Phi^{-1}$ of :
\begin{figure}[!hbtp]
	  \centering
	  \input{./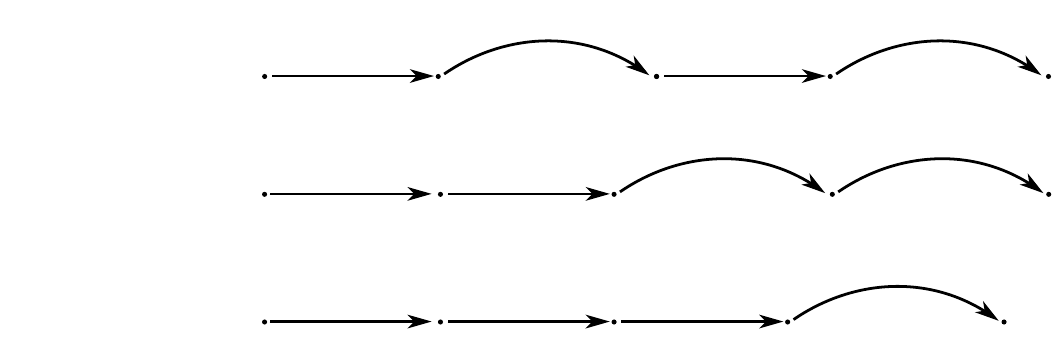_tex}
\end{figure}

Associativity in $G$ imposes the claimed relations on $\vartriangleright$ and $\phi$, via the following situations :
\begin{figure}[!hbtp]
	  \centering
	  \input{./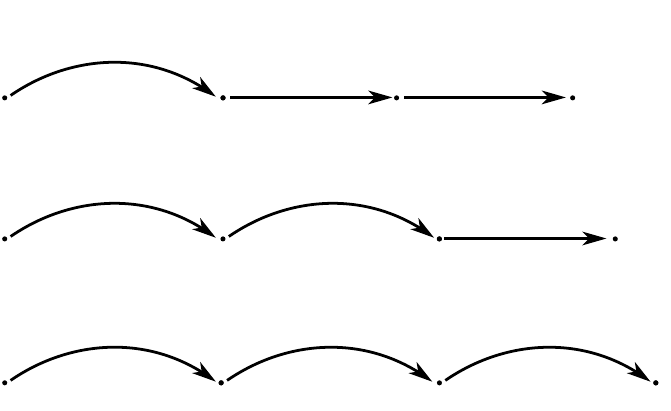_tex}
\end{figure}
Moreover $!(\Vec{aa})!(\Vec{ab})=\phi_{\vec{aa},\vec{ab}}!(\Vec{ab})$ shows that $\phi_{\vec{aa},\vec{ab}}=!(\Vec{aa})$ for all $b\in B$, hence the last condition. Now we can see that the identity $\imath'$ of $\#$ is given by $((\phi_{\vec{aa},\vec{aa}})^{-1})_{\vec{aa}}=\Phi^{-1}(id_a)$, via :
\begin{figure}[!hbtp]
	  \centering
	  \input{./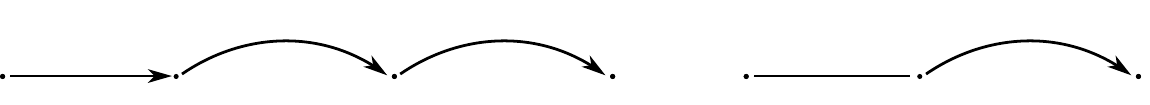_tex}
\end{figure}
\vspace{-0.5cm}\[\qedhere\]
\end{proof}

 This shows that necessary and sufficient information needed to describe a groupoid is given by its frame, its core, and two special maps.

\begin{Corr}Let $G$ be a group and B a set. Then connected groupoid structures on B with fiber G are given by pairs of maps 
\begin{align*}
	\phi:B\times B\times B \to G\\
	\rho:B\times B\times G \to G
\end{align*}
such that $\rho(a,b,\_)$ is a group isomorphism and that the following identities are satisfied:
\begin{align*}
  \rho(a,b,\rho(b,c,g))&=\phi(a,b,c)\rho(a,c,g)\phi(a,b,c)^{-1}\\
  \rho(a,b,\phi(b,cd))&=\phi(a,b,c)\phi(a,c,d)\phi(a,b,d)^{-1}\\
  \phi(a,a,b)&=\phi(a,a,a)
\end{align*}
\end{Corr}

\begin{Corr}Let $(B,G,\rho,\phi)$ and $(B,G,\tilde{\rho},\tilde{\phi})$ be two groupoids on a pair $(B,G)$ as defined 
	above. Then they are isomorphic if and only if there is a map $\Gamma:B\times B\to G$ satisfying :
	\begin{align*}
		\Gamma(a,b)\tilde{\rho}(a,b,g)&=\rho(a,b,g)\Gamma(a,b)\\
		\Gamma(a,b)\rho(a,b,\Gamma(b,c))\phi(a,b,c)
		&=\tilde{\phi}(a,b,c)\Gamma(a,c)
	\end{align*}
\end{Corr}
\begin{proof}
Suppose that you have two sections $!$ and $?$, then define $\Gamma_{\vec{ab}}:=?(\vec{ab})(!(\vec{ab}))^{-1}$. The above relations come from transforming $?$ to $!$ in the following pictures :
\begin{figure}[!hbtp]
	  \centering
	  \input{./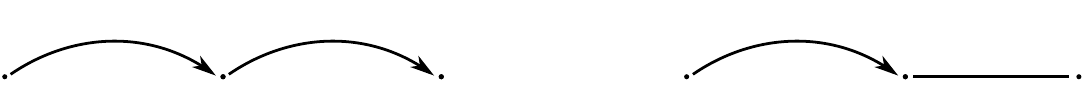_tex}
\end{figure}
\vspace{-5mm}
\[\qedhere\]
\end{proof}\vspace{-3mm}
  Remark that being given one such groupoid, all equivalent groupoid structures on $(B,G)$ can be constructed from it 
  by any map $\Gamma:B\times B\to G$. Without loss of generality, one can then consider 
  that $!\vec{aa}=id_a$ and the axiom $\phi_{\vec{aa},\vec{ab}}=\phi_{\vec{aa},\vec{ab}}$ becomes 
  $\phi_{\vec{aa},\vec{ab}}=id_a$.

														     \section{Double Groups}

 \begin{Def}A \textbf{double groupoid} is a (strict) double category where every square has both horizontal and vertical inverses\footnote{Note that consequently vertical and horizontal arrows are invertible as well, since their identities are.}. A \textbf{double group} is a double groupoid with a single object.
\end{Def}
To save time and space, we will take the convention not to represent objects any more unless absolutely necessary, and we will let the orientation of the page determine what is horizontal and what is vertical. The picture then becomes :\clearpage
	\begin{figure}[!hbtp]
		\centering
		\input{./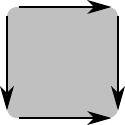_tex}
	\end{figure}
 The boundary groups of a double group are far from determining it. Further studies of double groupoids will tell us what extra structure they embody.
The notion of core gives way to different definitions in the case of double groupoids :
\begin{Def} Let $\tau$ be a double groupoid, then:
\begin{itemize}\addtolength{\itemsep}{-0.5\baselineskip}
\item	Its \textbf{core groupoid} $\tau_\lrcorner$ is the diagonal groupoid of elements of $\tau$ whose targets are identities. They are squares of the form :\vspace{-3mm}
	\begin{figure}[!hbtp]
	  \centering
	  \input{./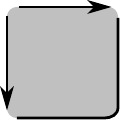_tex}
	\end{figure}\\	
	whose multiplication is defined by :
	\begin{figure}[!hbtp]
	  \centering
	  \input{./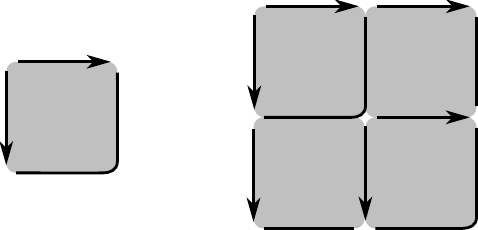_tex}
	\end{figure}
\item  Its \textbf{core bundle} $\tau_\bullet$ is the sub groupoid of $\tau_\lrcorner$ whose boundaries are all identities. It is a group bundle over the objects whose elements are squares are of the form :	
	\begin{figure}[!hbtp]
	  \centering
	  \input{./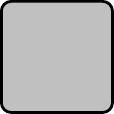_tex}
	\end{figure}
\item Its \textbf{core diagram} is the following diagram of groupoids:\vspace{-3mm}
	\begin{figure}[!hbtp]
	  \centering
	  \input{./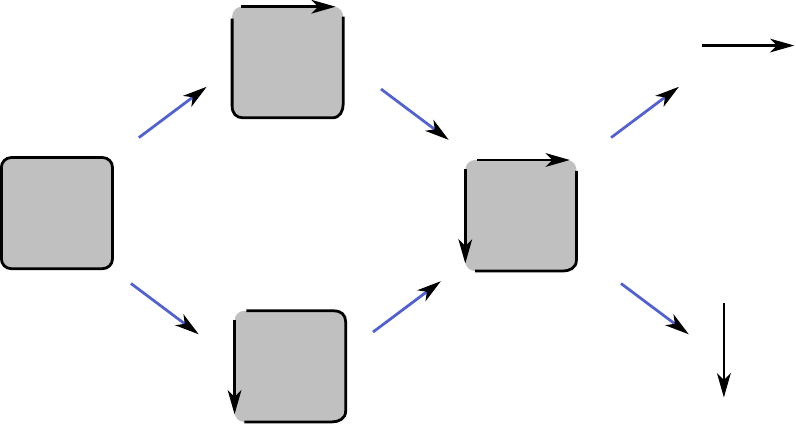_tex}
	\end{figure}
\end{itemize}
\end{Def}

\begin{Lem}The core groupoid is a groupoid.
\end{Lem}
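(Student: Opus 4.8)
The plan is to read off the category structure of $\tau_\lrcorner$ first and then upgrade it to a groupoid. Its objects are those of $\tau$, and a morphism $a\to b$ is a core square: one whose two target edges are identities, so that (with the corner conventions above) three of its corners collapse to $b$ while its source edges $f,g\colon a\to b$ stay free; these edges are invertible because in a double groupoid the horizontal and vertical arrows are invertible. The diagonal product $A\cdot B$ of $A\colon a\to b$ and $B\colon b\to c$ is the composite of the $2\times 2$ array with $A$ at the top left, $B$ at the bottom right, and the identity squares $\imath_{2,v}$ and $\imath_{2,h}$ of the source edges of $B$ in the two remaining cells. First I would record that this array actually composes: the interchange law guarantees that composing rows-first and columns-first give the same square, so $A\cdot B$ is well defined, and reading off its boundary shows its target edges are again identities. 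Hence $A\cdot B$ is again a core square with source $a$ and target $c$, and closure is immediate.

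Next I would check the remaining category axioms. The two-sided unit at $a$ is the fully degenerate square $\square_a=\imath_{1,v}\imath_{2,h}(a)$; taking it as $B$ (or as $A$) turns both off-diagonal fillers into genuine $\circ_{2,h}$- and $\circ_{2,v}$-identities, so the array collapses and $A\cdot\square_b=A=\square_a\cdot A$ by the unit axioms. For associativity I would place $A,B,C$ along the diagonal of a $3\times 3$ array, fill the off-diagonal cells with the corresponding identity squares, and observe that both bracketings $(A\cdot B)\cdot C$ and $A\cdot(B\cdot C)$ arise as composites of this one tiling; its unambiguous composability follows from associativity of $\circ_{2,h},\circ_{2,v}$ together with the interchange law, which is precisely the kind of tiling analysed by Dawson and Paré \cite{R.Dawson-R.Pare1993}.

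To obtain inverses it suffices, by the standard fact that a category in which every morphism has a right inverse is a groupoid, to produce for each core square $A\colon a\to b$ a core square $A^{-1}\colon b\to a$ with $A\cdot A^{-1}=\square_a$. Writing $A^{-h}$ for the horizontal inverse of $A$ supplied by the double-groupoid axiom, I would set $A^{-1}:=A^{-h}\circ_{2,v}\imath_{2,h}(g^{-1})$, where $g$ is the vertical source edge of $A$; a boundary computation shows this is a core square with source edges $f^{-1},g^{-1}\colon b\to a$. Expanding $A^{-1}$ and absorbing the vertical identity square $\imath_{2,v}(f^{-1})$ in the top-right cell by the $\circ_{2,v}$-unit law, the product $A\cdot A^{-1}$ becomes the composite of the $2\times 2$ array with $A,A^{-h}$ in the top row and two copies of $\imath_{2,h}(g^{-1})$ in the bottom row, and the interchange law rewrites it as $(A\circ_{2,h}A^{-h})\circ_{2,v}\bigl(\imath_{2,h}(g^{-1})\circ_{2,h}\imath_{2,h}(g^{-1})\bigr)$. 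The first factor is $\imath_{2,h}(g)$ by the defining relation of the horizontal inverse, the second is $\imath_{2,h}(g^{-1})$ since identity squares are idempotent, and functoriality of $\imath_{2,h}$ under vertical composition then gives $\imath_{2,h}(g)\circ_{2,v}\imath_{2,h}(g^{-1})=\imath_{2,h}(gg^{-1})=\square_a$, as required.

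The main obstacle is this last step. The correct inverse is not any reflection or rotation of $A$, since those land among squares with the wrong pair of identity edges; one must instead guess the right filling, the horizontal inverse post-composed with a horizontal identity square, and then verify an equality of squares rather than merely of boundaries, which is nontrivial because squares with all-identity boundary form the (possibly nontrivial) core bundle $\tau_\bullet$. The companion identity $A^{-1}\cdot A=\square_b$ need not be checked separately, as it is delivered by the right-inverse lemma; everything else is bookkeeping with the interchange law and the unit and inverse axioms already in hand.
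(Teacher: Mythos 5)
Your construction of the inverse --- the horizontal inverse $A^{-h}$ vertically composed with the identity square on the inverted source edge --- is exactly the paper's ($X^{-1}:=X^{-h}\circ_v id_{f^{-1}}$), and your interchange-law computation reproduces the paper's pictorial verification. The only divergence is that you check a single right-inverse identity and invoke the standard one-sided-inverse argument, whereas the paper exhibits both $XX^{-1}$ and $X^{-1}X$ directly; both are valid, and your explicit treatment of units and associativity merely fills in details the paper asserts follow from the double-category axioms.
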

\begin{proof}Identity and associtivity follow from the double category axioms. Let $X\in \tau_\lrcorner$ such that $f:=s_h(X)$, then $X^{-1}:=X^{-h}\circ_v id_f$. Then 
	\begin{figure}[!hbtp]
	  \centering
	  \input{./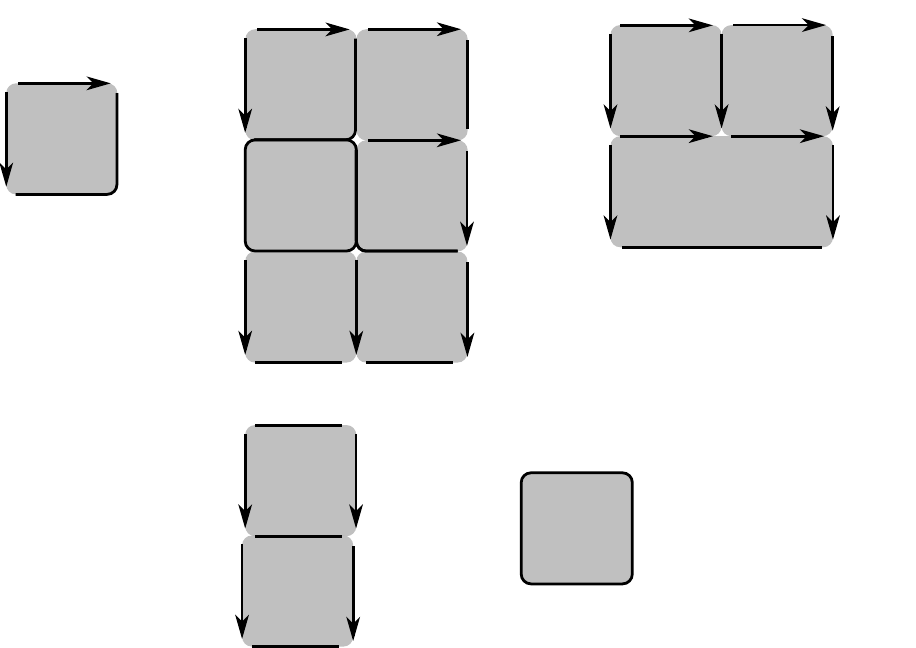_tex}\vspace{4mm}
	  \input{./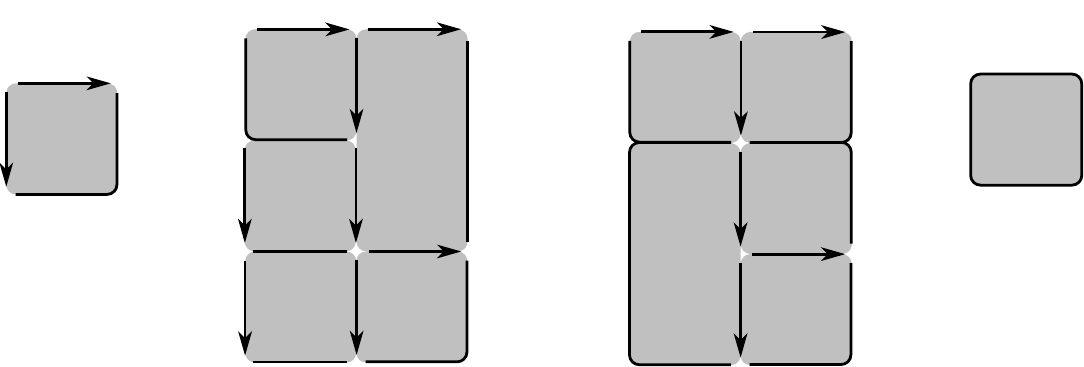_tex}
	\end{figure}\\
 proves that $X$ has both a right and a left inverse, and therefore is invertible.
\end{proof}

Core groupoids appeared in by Brown and Mackenzie's \cite{brown1992determination}, where they show that a certain class of double groupoids is determined by their cores. Note that it is not the case in general, though it provides a great deal of information.
Let's consider the simplest core diagram of a double group that is shared by two double groupoids:
\begin{align*}
	\xymatrix{
    &\{e\}\ar[dr]&&Z_2\\
    \{e\}\ar[ur]\ar[dr]&&\{e\}\ar[dr]\ar[ur]\\
    &\{e\}\ar[ur]&&Z_2\\}
\end{align*}
Then there are two double groupoids having it as a core groupoid. Since there is only one non-identity arrow in each direction, it will not be labelled. The smallest one has only identities :
\begin{figure}[!hbtp]
		\centering
		\input{./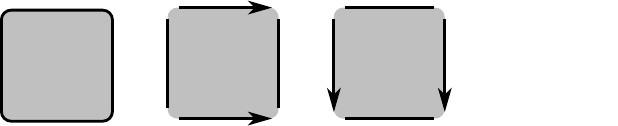_tex}
\end{figure}\\
The other one has one more square that is its own inverse :
\begin{figure}[!hbtp]
		\centering
		\input{./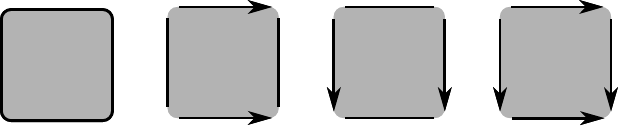_tex}
\end{figure}\\
Remark that any other square would impact the core diagram. The situation becomes more complex as the core diagrams involve bigger groupoids but even such a simple example makes the point. Let's analyse the core diagram further.

\begin{Lem}The core bundle of a double groupoid is an abelian group bundle over its objects.
\end{Lem}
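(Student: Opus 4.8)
The plan is to recognize the statement as an instance of the Eckmann--Hilton argument applied to the two square-compositions $\circ_{2,h}$ and $\circ_{2,v}$. First I would fix an object and work inside the corresponding fiber $E$ of the core bundle $\tau_\bullet$, i.e. the set of squares all four of whose boundary arrows are identities. Since every side is an identity, every source--target matching condition required by $\circ_{2,h}$ and by $\circ_{2,v}$ is automatically met, so any two elements of $E$ are composable in both directions; moreover the boundaries of either composite are again composites of identities, hence identities, so $E$ is closed under both operations. Thus $E$ carries two binary operations, each of which is already known (from the preceding lemma) to make it a group.

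Next I would pin down the shared unit. By the consequence of the interchange law recorded earlier, the identity-on-identity square is unique, $\imath_{1,v}\imath_{2,h}=\imath_{1,h}\imath_{2,v}$; call it $1$. Because $\imath_{2,h}$ and $\imath_{2,v}$ are the units of $\circ_{2,h}$ and $\circ_{2,v}$, the square $1$ is a two-sided unit for both operations on $E$. With the interchange law in the form $(A\circ_{2,h}B)\circ_{2,v}(C\circ_{2,h}D)=(A\circ_{2,v}C)\circ_{2,h}(B\circ_{2,v}D)$, the Eckmann--Hilton shuffle then gives, for $A,B\in E$, first coincidence
\begin{align*}
A\circ_{2,v}B &= (A\circ_{2,h}1)\circ_{2,v}(1\circ_{2,h}B)\\
&= (A\circ_{2,v}1)\circ_{2,h}(1\circ_{2,v}B)= A\circ_{2,h}B,
\end{align*}
and then commutativity
\begin{align*}
A\circ_{2,v}B &= (1\circ_{2,h}A)\circ_{2,v}(B\circ_{2,h}1)\\
&= (1\circ_{2,v}B)\circ_{2,h}(A\circ_{2,v}1)= B\circ_{2,h}A= B\circ_{2,v}A.
\end{align*}
Hence the single group operation on each fiber is abelian, and since the core bundle is a group bundle over the objects this holds fiberwise, proving that $\tau_\bullet$ is an abelian group bundle.

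The only real work is the bookkeeping of the first paragraph --- checking that on $E$ both compositions are everywhere defined and genuinely share the unit $1$ --- after which the two displayed lines are forced by the interchange axiom. I do not expect any serious obstacle: the content is entirely that two unital, interchanging operations on a set must coincide and be commutative, and the double-groupoid hypotheses (invertibility of squares) are not even needed for this part, only the double-category axioms restricted to the identity-bounded squares.
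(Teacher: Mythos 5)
Your proof is correct and is precisely the paper's argument: both identify the shared unit via the uniqueness of the identity-on-identity square and then run the Eckmann--Hilton shuffle with the interchange law to show the two square-compositions agree and are commutative on each fiber. The only difference is presentational --- the paper draws the shuffle pictorially while you write it out in equations.
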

\begin{proof}
This is the celebrated Eckmann-Hilton argument :
\begin{figure}[!hbtp]
		\centering
		\input{./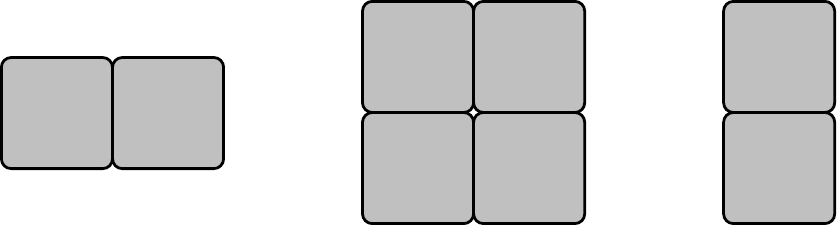_tex}
\end{figure}\\
This first step shows that the horizontal and vertical compositions are identical and correspond to the composition in the core bundle. If we continue to exploit the existance and uniqueness of identity squares on objects and the interchange law, we get the result announced.\\
\begin{figure}[!hbtp]
		\centering
		\input{./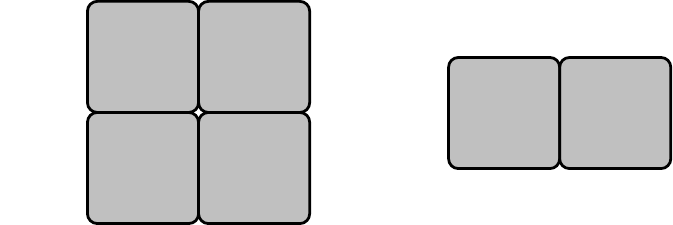_tex}
\end{figure}\\
Note that we are literally rotating A and B around each other.
\end{proof}
In \cite{Andruskiewitsch2005}, Andruskiewitsch and Natale show that  a special kind of double groupoid called vacant, corresponds exactly to factorizations of groupoids, which in the group case can also found in literature as matched pair of groups, bicrossed products, knit products or Zappa-Szep products \cite{Agore2009,Agore2010} of groups. A matched pair of groups is a pair of subgroups H,K of a group G such that each element of G can be uniquely written $g=hk$ for $h\in  H$ and $k\in K$. A matched pair of groups is equivalent to a bicrossed product of the same groups, written $H\Join K$. The rest of the section will extend the result to a bigger class of double groupoids. Let's recall the definition of vacancy.

\begin{Def}A double groupoid is \textbf{vacant} if any pair of possible horizontal sources and vertical targets is the boundary of a unique square.
\end{Def}

The following definitions will help broaden the notion:

\begin{Def}A double groupoid is \textbf{slim} if its core bundle is the trivial bundle. It is \textbf{exclusive} if its core groupoid and its core bundle are identical.
\end{Def}

Remember that the core bundle is a totally disconnected subgroupoid of the core groupoids, the one whose square have all boundaries identity arrows. Slim double groupoids have at most one cell per boundary, as the next lemma shows, so the only data they contain is which boundaries correspond to a square, in other words a slim double groupoid is a special subset of $H\times H\times V\times V$.

\begin{Lem}Let $\tau$ be a double groupoid, $X,Y\in \tau$ with the same boundary. Then there exists a unique element 
$u_{X,Y}$ in the core of $\tau$ such that :
	\begin{figure}[!hbtp]
	  \centering
	  \input{./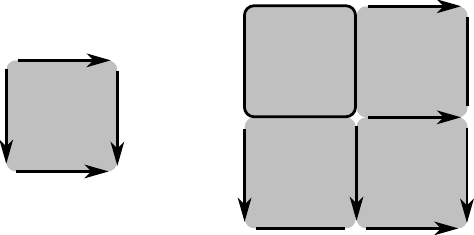_tex}
	\end{figure}
\end{Lem}
\begin{proof}
	Defining $u_{X,Y}$ by :
	\begin{figure}[!hbtp]
	  \centering
	  \input{./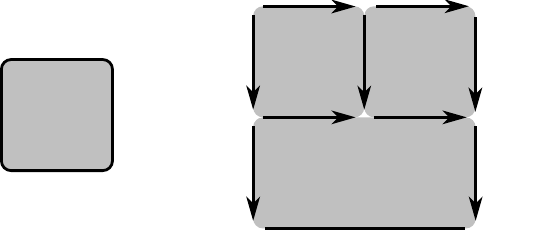_tex}
	\end{figure}\\
	and using inverses to isolate X yields the claim.
\end{proof}

\begin{Corr}A double groupoid is slim if and only if there is at most one square for a given set of boundaries.
\end{Corr}

	 Exclusive double groupoids may have many squares for a given boundary condition but this boundary is completely determined by the knowledge of two of its boundaries, one of each type, as the following lemma shows.\clearpage

\begin{Lem}Let $\tau$ be a double groupoid, $X,Y\in \tau$ with the same targets. Then there exists a unique element 
$t_{X,Y}$ in the core groupoid of $\tau$ such that :
	\begin{figure}[!hbtp]
	  \centering
	  \input{./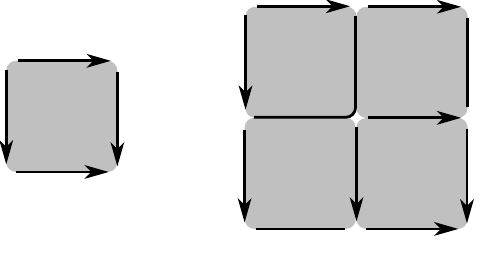_tex}
	\end{figure}
\end{Lem}

\begin{proof}Just as in the previous lemma, defining $t_{X,Y}$ by :
	\begin{figure}[!h]
	  \centering
	  \input{./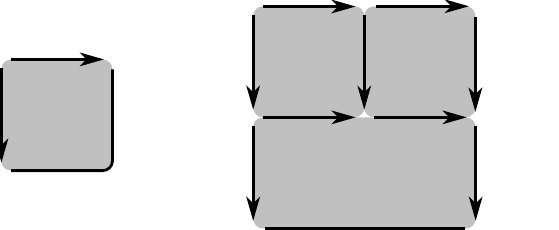_tex}
	\end{figure}\\
	and using inveses to isolate X yields the claim.
\end{proof}

\begin{Corr}A double groupoid is exclusive if and only if the boundary of its squares are determined by one of their boundaries of each type.
\end{Corr}
\begin{proof}Suppose two squares share one boundary of each type, then one of their inverses (horizontal, vertical or both) will share both their targets. By the above lemma they also share their sources. Using the same inverse again yields the result.
\end{proof}

\begin{Def}A double groupoid is \textbf{maximal} if any pair of vertical-horizontal arrows is the target pair of a square.
\end{Def}

In the example we gave earlier, both double groupoids were exclusive but only the second was maximal. Let's recall the definition of vacant

\begin{Lem}A vacant double groupoid is slim, maximal and exclusive.
\end{Lem}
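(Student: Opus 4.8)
The plan is to derive each of the three conclusions from one half of the vacancy hypothesis: \emph{slimness} and \emph{exclusivity} will come from the \textbf{uniqueness} clause, while \emph{maximality} will come from the \textbf{existence} clause. The recurring device is to compare an arbitrary square against a suitable identity square, invoking that in a vacant double groupoid two squares sharing the same horizontal source and the same vertical target must coincide.

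First I would dispose of slimness and exclusivity together. For slimness, let $u\in\tau_\bullet$; by definition all four of its boundaries are identities, so in particular its horizontal source and its vertical target are identity arrows. The identity-of-identity square $\imath_{1,v}\imath_{2,h}$ on the relevant object carries exactly the same horizontal source and vertical target, so the uniqueness clause of vacancy forces $u$ to equal it. Hence $\tau_\bullet$ contains only identities, i.e.\ it is the trivial bundle. For exclusivity I must show $\tau_\lrcorner=\tau_\bullet$; since $\tau_\bullet\subseteq\tau_\lrcorner$ always holds, it suffices to prove every core-groupoid square is trivial. Take $X\in\tau_\lrcorner$, so $t_{2,h}(X)$ and $t_{2,v}(X)$ are identities, and put $f:=s_{2,h}(X)$. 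The horizontal identity square $\imath_{2,h}(f)$ has horizontal source $f$ and identity vertical target, matching the $(\text{horizontal source},\text{vertical target})$ pair of $X$; by vacancy $X=\imath_{2,h}(f)$. Comparing horizontal targets then gives $f=t_{2,h}(X)=1$, so $X$ is the identity-of-identity square. Thus $\tau_\lrcorner$ is trivial and equals $\tau_\bullet$.

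For maximality, given a compatible vertical arrow $f$ and horizontal arrow $g$ to be realised as the target pair $(t_{2,h},t_{2,v})$ of some square, I would produce it by inverting a square supplied by vacancy. A corner check shows that $(f,g^{-1})$ is an admissible $(\text{horizontal source},\text{vertical target})$ pair, so the existence clause yields a square $Y$ with $s_{2,h}(Y)=f$ and $t_{2,v}(Y)=g^{-1}$. Its horizontal inverse $Y^{-h}$ swaps horizontal source and target and inverts the vertical target in place, so that $t_{2,h}(Y^{-h})=f$ and $t_{2,v}(Y^{-h})=g$, exactly as required.

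The delicate part — and essentially the only place I expect to have to be careful — is the bookkeeping of corners and orientations: verifying that each pair I feed into vacancy is genuinely \emph{compatible} (shares the correct corner), and that horizontal inversion acts on the boundary precisely as claimed, namely source/target swapped and top/bottom inverted without being interchanged. None of this is deep, but it is where a side-convention slip would break the argument, so I would double-check the horizontal-inverse boundary computation against the interchange law before committing it to the write-up.
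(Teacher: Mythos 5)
Your argument is correct and follows essentially the same route as the paper, which simply cites the preceding $u_{X,Y}$/$t_{X,Y}$ lemmas for slimness and exclusivity (both consequences of the uniqueness clause of vacancy) and invokes the existence clause together with an inverse for maximality; you merely carry out the uniqueness step explicitly by comparison with identity squares. The only cosmetic difference is that the paper says ``vertical inverses'' where you use the horizontal inverse, which is a matter of orientation convention and does not affect the argument.
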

\begin{proof}By the above lemma it is slim and exclusive. A simple use of vertical inverses show that it is maximal as well.
\end{proof}
A better definition for maximality would be the following:
\begin{center}"A double groupoid is maximal if it is maximal in the poset of double groupoids sharing a given core diagram."\end{center}
But it is yet to be proven that these definitions are equivalent in the case of exclusive double groupoids, which brings us to the problem of determining how many different exclusive double groupoids exist for a given core bundle. It is an open question and is reserved for further research. We will meet this problem again later in the paper.\\
On another note, it has been expected by the author that n-tuple groupoids that are vacant, i.e. maximal slim and exclusive, correspond to matched n-tuples of groups. It was conjectured by Brown in \cite{Brown11} and we prove it for all dimensions in \cite{Majard2}.
  Exclusive double groups are interesting in their own right and the next section will provide a prime example of these. The main result of this paper is the following theorem :

\begin{Thm}Maximal exclusive double groupoids equipped with a section are in one to one correspondence with fibered semi-direct products of an abelian group bundle with a matched pair of groupoids.
\end{Thm}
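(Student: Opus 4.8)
The plan is to lift the one-dimensional decomposition of the Section 2 Theorem one categorical level up, replacing ``frame, core and two maps'' by ``matched pair, abelian bundle, a semidirect action and a cocycle''. Let $\tau$ be maximal exclusive, $A := \tau_\bullet$ its core bundle (abelian by the Eckmann--Hilton Lemma), and $H,V$ its horizontal and vertical edge groupoids. First I would form the \emph{vacant shadow}: quotienting squares by $A$ (using the Lemma on $u_{X,Y}$, which identifies equal-boundary squares up to a unique core element) gives a double groupoid $\bar\tau$ with at most one square per boundary; maximality and exclusivity descend, so $\bar\tau$ is slim, maximal and exclusive, i.e.\ vacant, hence a matched pair of groupoids $(H,V)$ with mutual actions $\triangleleft$ and $\triangleright$ by the cited result of Andruskiewitsch and Natale. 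A section in the sense of the statement is then a splitting $!$ of $\tau\to\bar\tau$, identity on edges; by maximality and exclusivity this amounts to a choice, for each admissible target pair $(h,k)$, of a square $!(h,k)$ with those targets (its sources being forced).

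Next I would coordinatize. Exclusivity and the Corollary fix the full boundary of a square from its target pair $(h,k)$, and the Lemma on $u_{X,Y}$ applied to $X$ and $!(h,k)$ yields a unique $a\in A$ with $X = a\cdot !(h,k)$. This produces a bijection $\Phi : A\times(H\Join V)\to\tau$, identity on edges, exactly parallel to the graph isomorphism of the Section 2 Theorem. Transporting the horizontal and vertical square-compositions across $\Phi$ writes each in coordinates $(a,(h,k))$: the $(h,k)$-part composes by the matched-pair actions, while the $A$-part composes in the abelian bundle, twisted by an action of $H\Join V$ on $A$ (conjugating core squares past the chosen sections) and by a two-variable cocycle $\phi$ measuring the failure of $!$ to respect composition --- precisely the two-dimensional counterparts of $\neg$ and $\phi$ from the Section 2 Lemma. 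This is the data I would package as the \emph{fibered semidirect product} $A\rtimes(H\Join V)$.

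The remaining two directions are bookkeeping in an already-established style. Forward, I would verify that associativity of both compositions, together with the interchange law, forces $\phi$ and the action to satisfy exactly the semidirect/cocycle identities (the analogues of the three relations in the Section 2 Corollary). Conversely, from an abelian bundle $A$, a matched pair $(H,V)$, a compatible action and such a cocycle, I would define squares as triples, the two compositions by these formulas, and check the double-groupoid axioms; invertibility is automatic since $A$ is a group bundle and $(H,V)$ a matched pair, and maximality and exclusivity hold by construction. Finally I would check the two assignments are mutually inverse and, as in the second Section 2 Corollary, that changing $!$ alters the data only by a coboundary $\Gamma$.

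The main obstacle is the interchange law. In dimension one a single associativity constraint sufficed to produce the relations on $\phi$ and $\neg$; here the horizontal and vertical compositions must additionally \emph{commute}, and it is this extra constraint --- filtered through the Eckmann--Hilton abelianness of $A$ and the matched-pair compatibility of $\triangleleft,\triangleright$ --- that pins down the correct twisted cocycle condition defining the fibered semidirect product. Keeping straight which action acts on which factor across both composition directions, so that interchange holds on the nose, is the delicate technical heart of the argument.
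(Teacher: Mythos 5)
Your overall strategy is sensible and the first half of it tracks the spirit of the paper, but there is one substantive mismatch that prevents the argument from proving the theorem as stated. You allow an arbitrary graph-level section $!$ and then introduce a two-variable cocycle $\phi$ measuring its failure to respect composition, packaging the result as ``$A\rtimes(H\Join V)$ twisted by a cocycle.'' A semidirect product, fibered or not, is by definition a \emph{split} extension with no cocycle; an extension with a nontrivial $\phi$ is not a semidirect product, so what your construction classifies is general abelian extensions of the matched pair by $A$, not the objects named in the theorem. The paper avoids this by requiring the sections in $\Cat{MaxExcl}$ to be double functors on the frame: in the final lemma of Appendix B the composition built from the barycentric subdivision $\begin{bmatrix}X& !\\ !&Y\end{bmatrix}$ is only asserted to be associative and unital ``when $!$ is a double functor,'' and correspondingly the composition law on squares $(hk'=kh',a)$ is the cocycle-free formula $a\,hbh^{-1}$. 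To repair your argument you must either restrict to multiplicative sections (in which case your $\phi$ is identically trivial and the coboundary discussion at the end becomes unnecessary) or prove that the cocycle can always be normalized away, which is not true in general.

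Beyond that, your route is genuinely different from the paper's and worth comparing. You coordinatize directly, lifting the frame/core decomposition of Section 2 one level: quotient by the core bundle to get a vacant shadow, invoke Andruskiewitsch--Natale to read off the matched pair, then transport the two compositions across the bijection $A\times(H\Join V)\to\tau$ and grind out the identities forced by associativity and interchange. The paper instead sets up a functor $\Gamma:\Cat{2\text{-semi}}\to\Cat{MaxExcl}$ from quadruples $(G,H,K,A)$ with $A$ abelian, normalized by $H$ and $K$, and $A\cap H=A\cap K$ discrete, builds an adjoint $\Lambda$ by turning the squares of $\tau$ themselves into a groupoid via the section, and shows the adjunction restricts to an equivalence on the subcategory where $G=AHK$. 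The functorial formulation buys naturality in morphisms (which the one-to-one correspondence of the theorem implicitly needs) and sidesteps your ``delicate technical heart'': the interchange law is absorbed into checking that $\Gamma$ lands in double groupoids, where Eckmann--Hilton abelianness of $A$ and the conjugation formulas do the work in a few lines. Your quotient step $\tau\to\bar\tau$ also needs the (easy but unstated) verification that both compositions descend modulo the $A$-action; the paper never forms this quotient, extracting $H$, $K$ and $A$ instead as subgroupoids of $\Lambda(\tau)$.
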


The proof, together with a more precisely stated theorem can be found in Appendix A. 

													        \section{The Poincare Group}

   To get to interesting examples, we chose to reduce our attention to double groups, in which case, the last chapter taught us that the structure of a vacant double group is a bicrossed product of groups. A very special case of such a decomposition is given by the Iwasawa decomposition of semisimple Lie groups, also called K(AN) decomposition for the subgroups appearing in the decomposition are usually denoted K, A and N. For more information on Lie double groups, see Brown and Macenzie \cite{brown1992determination}.\\
There is a Lie group that is very special to any theoretical physicist : the Lorentz group, SO(3,1). It is the group of symmetries of Minkowski spacetime in special relativity, i.e. the linear transformations of $\mathbb{R}^4$ preserving the diagonal matrix diag(-1 1 1 1) while fixing the origin. It contains rotations of euclidian space and "boosts", together with combinations of them, and while rotations form a subgroup, the boosts do not. Its K(AN) decomposition decomposes SO(3,1) accordingly, as shows the following lemma :
\begin{Lem}The Iwasawa decomposition of SO(3,1) is given by the following subgroups:
\begin{align*}
K&:=exp\Bigg\{\begin{bmatrix}0&0&0&0\\0&0&a&b\\0&-a&0&c\\0&-b&-c&0\end{bmatrix}\quad|a,b,c\in\mathbb{R}\Bigg\}\simeq SO(3)
\end{align*}
\begin{align*}
A&:=exp\Bigg\{\begin{bmatrix}0&a&0&0\\a&0&0&0\\0&0&0&0\\0&0&0&0\end{bmatrix}\quad|a\in\mathbb{R}\Bigg\}\simeq SO(1,1)\\
N&:=exp\Bigg\{\begin{bmatrix}0&0&a&b\\0&0&a&b\\a&-a&0&0\\b&-b&0&0\end{bmatrix}\quad|a,b\in\mathbb{R}\Bigg\}\\
\end{align*}
\end{Lem}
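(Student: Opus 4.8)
The plan is to reduce the statement to the general Iwasawa theorem for a real semisimple Lie group by exhibiting the three displayed matrix spaces as the $\mathfrak{k}$, $\mathfrak{a}$ and $\mathfrak{n}$ summands of an Iwasawa decomposition of the Lie algebra $\mathfrak{so}(3,1)$, and then exponentiating. First I would fix the metric $\eta=\mathrm{diag}(-1,1,1,1)$ and recall that $\mathfrak{so}(3,1)=\{X\mid X^{T}\eta+\eta X=0\}$ is six-dimensional. A single observation handles membership for all three families at once: $X\in\mathfrak{so}(3,1)$ exactly when $\eta X$ is antisymmetric, and a direct inspection shows this holds for each displayed matrix, so $K$-, $A$- and $N$-type generators all lie in $\mathfrak{so}(3,1)$.

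Next I would introduce the Cartan involution $\theta(X)=-X^{T}$, whose $+1$ eigenspace $\mathfrak{k}$ is the set of elements of $\mathfrak{so}(3,1)$ that are antisymmetric, i.e. the block $\mathfrak{so}(3)$ acting on the spatial coordinates — exactly the first displayed family — and whose $-1$ eigenspace $\mathfrak{p}$ is the three-dimensional space of boosts. I would then check that the one-parameter family $A$ spans a maximal abelian subspace $\mathfrak{a}\subset\mathfrak{p}$; since $SO(3,1)$ has real rank one this reduces to showing $\dim\mathfrak{a}=1$ and that the centralizer of $A$ inside $\mathfrak{p}$ reduces to $\mathfrak{a}$ itself, a short explicit bracket computation.

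With $\mathfrak{a}$ fixed I would compute the restricted root space decomposition of $\mathfrak{so}(3,1)$ relative to $\mathrm{ad}(H)$, where $H$ generates $\mathfrak{a}$. One finds eigenvalues $0,\pm\lambda$ with each of the $\pm\lambda$ eigenspaces two-dimensional; selecting the positive root yields the nilpotent subalgebra $\mathfrak{n}$, which I would verify coincides with the third displayed family by matching the two eigenvectors of $\mathrm{ad}(H)$ with the parameters $a,b$. A dimension count $3+1+2=6$ together with linear independence then gives the vector-space decomposition $\mathfrak{so}(3,1)=\mathfrak{k}\oplus\mathfrak{a}\oplus\mathfrak{n}$, the three summands being subalgebras for the standard reasons (eigenspace of $\theta$, abelian, sum of positive root spaces).

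Finally I would invoke the Iwasawa theorem: exponentiating this decomposition gives $SO(3,1)=KAN$ with the multiplication map $K\times A\times N\to SO(3,1)$ a diffeomorphism, while the identifications $K=\exp\mathfrak{k}\simeq SO(3)$ and $A=\exp\mathfrak{a}\simeq SO(1,1)$ follow from the block forms recorded above. The main obstacle is the third step: correctly diagonalizing $\mathrm{ad}(H)$ and confirming that the displayed $N$ is precisely the positive root space (and not its transpose), together with the maximality of $\mathfrak{a}$ in $\mathfrak{p}$ rather than mere abelianness. Everything else is routine linear algebra and appeal to the cited structure theory.
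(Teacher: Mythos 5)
Your proposal is correct, but it is worth noting that the paper does not actually prove this lemma at all: it simply cites Gallier's notes and moves on. What you have written is essentially the standard structure-theoretic argument that such a reference would contain --- identify $\mathfrak{k}$ as the $+1$ eigenspace of the Cartan involution $\theta(X)=-X^{T}$ (which forces $X$ to commute with $\eta$ and hence reduces to the spatial $\mathfrak{so}(3)$ block), check that the single boost generator $H=E_{12}+E_{21}$ spans a maximal abelian subspace of $\mathfrak{p}$, and diagonalize $\mathrm{ad}(H)$ to find the two-dimensional positive root space; a direct computation such as $[H,\,E_{13}+E_{23}+E_{31}-E_{32}]=E_{13}+E_{23}+E_{31}-E_{32}$ confirms that the displayed $N$ generators are indeed eigenvectors of eigenvalue $+1$, so your worry about picking the positive rather than the negative root space is easily settled, and the count $3+1+2=6$ closes the vector-space decomposition. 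So your route is sound and, if anything, more informative than the paper's bare citation. The one caveat you should add is about connectedness: the Iwasawa theorem in the form "multiplication $K\times A\times N\to G$ is a diffeomorphism" applies to the identity component $SO^{+}(3,1)$ (the proper orthochronous Lorentz group), whose maximal compact subgroup is the displayed $SO(3)$; the group $SO(3,1)$ itself has two components, so $\exp\mathfrak{k}\cdot\exp\mathfrak{a}\cdot\exp\mathfrak{n}$ only exhausts the identity component. This imprecision is inherited from the statement of the lemma (and from the paper's own usage of "the Lorentz group"), but your proof should say explicitly which group it is decomposing.
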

This result can be found in the book of J.Gallier \cite{gallier2011notes} amongst others.
 Keeping $AN$ as a single subgroup gives a bicrossed product presentation of the Lorentz group, which corresponds to a unique vacant double group. This vacant double group has boundary groups $SO(3)$ and $SO(1,1)\Join N$. \\
But just as the isometries of Euclidian space consist of rotations and translations, the Lorentz transformations form only a subgroup of the isometries of Minkowski spacetime: the Poincaré group. This group has then a canonical presentation as a semi-direct product between the Lorentz group and the group of translations in $\mathbb{R}^4$, which is abelian. We can then infer :

\begin{Corr} The Poincaré group has a decomposition of the form :
\begin{align*}Poinc\simeq (K\Join(AN)) \ltimes \mathbb{R}^4_+
\end{align*}
This decomposition is therefore represented by a maximal exclusive double group whose core is $(R^4,+)$ and boundary groups are $SO(3)$ and $SO(1,1)\Join N$
\end{Corr}

The corresponding maximal exclusive double group under the equivalence of the theorem has the following core diagram :
\begin{align*}
\xymatrix{
    &\mathbb{R}^4_+\ar[dr]&&SO(3)\\
    \mathbb{R}^4_+\ar[ur]\ar[dr]&&\mathbb{R}^4_+\ar[dr]\ar[ur]\\
    &\mathbb{R}^4_+\ar[ur]&&SO(1,1)\Join N}
\end{align*}
Schematically the roles of the different groups are summarized in the following picture :
	\begin{figure}[!hbtp]
	  \centering
	  \input{./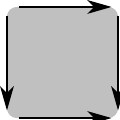_tex}
	\end{figure}\\
As claimed in the introduction, the presentation of the Poincare group as a double group has the very attractive property that the translations, the boosts and the rotations are kept distinct. Moreover, it is expected that maximal exclusive triple groups would allow to separate $SO(1,1)\Join N$. Of course, other Lie group decompositions of the Lorentz group can be considered and would give other double groups.

													        \section{Hyperplatonic solids}

 In our previous examples, we have considered very simple core groupoids, namely ones that were no bigger than the core bundle. But what about the remaining cases? Remark that, in the case of double groups, the core groupoid is, up to isomorphism, fully determined by the span of groups :
\begin{align*}\xymatrix{
    &\tau_{\lrcorner}\ar[dr]\ar[dl]\\
    \tau_h&&\tau_v}
\end{align*}
Indeed, taking the kernels of the projections and then the pullback of the obtained cospan restitutes the core diagram up to isomorphism. This brings us to the following characterization :
\begin{Lem}Core groups are in one to one correspondance with exact sequences
\begin{align*}
	\{e\}\to A\to G\to H\times K
\end{align*}
where $G,H,K\in\Cat{Grp}$ and $A\in\Cat{AbGrp}$.
\end{Lem}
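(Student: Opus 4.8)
The plan is to produce explicit maps in both directions between core groups---by which I mean the core-diagram data $\tau_h\leftarrow\tau_\lrcorner\to\tau_v$ of a double group together with its core bundle $\tau_\bullet$---and exact sequences of the stated form, and then to check that both maps are well defined on isomorphism classes and are mutually inverse. For the forward map I take a double group $\tau$ and set $G:=\tau_\lrcorner$, $H:=\tau_h$, $K:=\tau_v$ and $A:=\tau_\bullet$; these are honest groups because $\tau$ has a single object. The two legs of the defining span are the horizontal and vertical boundary homomorphisms $p_h$ and $p_v$, which I assemble into $p:=(p_h,p_v)\colon G\to H\times K$, and I assign to $\tau$ the sequence
\begin{align*}
	\{e\}\to A\to G\to H\times K ,
\end{align*}
where the first nontrivial arrow is the inclusion $\tau_\bullet\hookrightarrow\tau_\lrcorner$ and the last is $p$.

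Next I would verify that this lands among the claimed objects. Injectivity of $A\to G$ is immediate, since $\tau_\bullet$ is by definition a subgroupoid of $\tau_\lrcorner$, which gives exactness at $A$. Membership $A\in\Cat{AbGrp}$ is exactly the Eckmann--Hilton lemma proved above, which states that the core bundle is abelian. Exactness at $G$ reduces to the identity $A=\ker p$: an element of $\tau_\lrcorner$ is a square with identity targets, $p$ records its two remaining (source) boundaries, and $\tau_\bullet$ is precisely the subgroup of those core squares all of whose boundaries are identities, i.e. the kernel of $p$. This identification is the only nonformal point in the forward direction.

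For the inverse map I begin with a sequence $\{e\}\to A\to G\to H\times K$, compose $G\to H\times K$ with the two product projections to recover the span $H\leftarrow G\to K$, and then reconstruct the core diagram from it. Here I would lean on the remark preceding the statement: taking the kernels of the two legs of the span and then the pullback of the resulting cospan restitutes the whole core diagram up to isomorphism, the pullback being $\ker p\cong A$. Mutual inverseness then splits into two checks. Starting from a core group and applying both maps, the reconstruction remark returns the same core diagram up to isomorphism. Starting from an exact sequence, the span is recovered tautologically and $A$ is recovered as $\ker p$, canonically isomorphic to the given $A$ by exactness; I would phrase the whole statement at the level of isomorphism classes, with a commuting ladder as the notion of isomorphism of sequences and an isomorphism of the underlying diagrams of groups as the notion of isomorphism of core groups.

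The step I expect to be the main obstacle is the mutual-inverse verification, and within it the claim that no datum of the core diagram survives beyond the single map $p\colon G\to H\times K$. This is exactly what the reconstruction remark must guarantee, so I would first make that remark precise enough that the separate nodes of the core diagram---the kernels $\ker p_h$, $\ker p_v$ and their pullback $A$---are genuinely forced by the span. Once that is granted, abelianness of $A$ (supplied by Eckmann--Hilton on the geometric side and assumed on the algebraic side) is precisely the compatibility condition matching the two classes, and the remaining identifications are routine diagram-chasing.
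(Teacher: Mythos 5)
Your proposal matches the paper's own argument: the forward direction uses the universal arrow $\tau_\lrcorner\to\tau_h\times\tau_v$ whose kernel is the core bundle, and the reverse direction recovers the span via the product projections and then reconstitutes $A$ as the kernel (equivalently, the fibered product of the two kernel subgroups over $G$), exactly as in the paper. Your additional care about isomorphism classes and the mutual-inverse check is a reasonable tightening of the same route, not a different one.
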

\begin{proof} By universality of the pull-back that is given by the product, we have a unique arrow $\tau_\lrcorner\to\tau_h\times\tau_v$, as shown in the picture :
\begin{align*}
\xymatrix{\tau_\lrcorner\ar[dr]\ar[drrr]\ar@.[rr]^{!}&&\tau_h\times\tau_v\ar[dl]\ar[dr]&\\
	&\tau_h&&\tau_v}
\end{align*}
But the kernel of this map is the group of squares with horizontal and vertical sources trivial, i.e. the core bundle. This gives the exact sequence $\{e\}\to\tau_\bullet\to\tau_\lrcorner\to\tau_h\times\tau_v$ and proves the equivalence one way. For the other way, the projections of the product recover the span, from which we get the cospan on G by taking kernels. The span on A is recovered using the universality of the fibered product $P\times_GQ$, as shows the picture :
\begin{align*}\xymatrix{
	&&Q\ar[dr]&&&H\\
	P\times_GQ\ar[drr]\ar[urr]&A\ar@.[l]_{!}\ar[ur]\ar[dr]\ar[rr]&&G\ar[urr]\ar[drr]\ar[r]&H\times V\ar[ur]\ar[dr]\\
	&&P\ar[ur]&&&V}
\end{align*}
\end{proof}

\begin{Corr}Core groups of slim double groups are in one to one correspondence with exact sequences
\begin{align*}
	\{e\}\to\tau_\lrcorner\to\tau_h\times\tau_v
\end{align*}
\end{Corr}

In other words slim core diagrams correspond to subgroups of the product of the horizontal and vertical groups. This is particularly interesting in the light of the recent work of Ma\cite{Ma2007}. It seems that discrete symmetries may play an important role in upcoming theoretical physics.\\
In fact, regular and semi-regular polytopes in n dimensional Euclidian space can be studied from the finite subgroups of $SO(n)$, as shown in Coxeter's book \cite{coxeter1973regular}, which  for $n=4$ exhibits a behavior that doesn't exist in other dimensions. Indeed, the following lemma can be found in the literature, for example in \cite{varadarajan2004supersymmetry} .

\begin{Lem}There exist a short exact sequence :
	\begin{align*}\{e\}\to\mathbb{Z}_2\to Spin(n)\to SO(n)\to\{e\}
	\end{align*}
	Moreover, when $n=4$, there is an isomorphism $Spin(4)\simeq SU(2)\times SU(2)$.
\end{Lem}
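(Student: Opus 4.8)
The plan is to realise $Spin(n)$ concretely as a subgroup of the real Clifford algebra $Cl(n)$ of $\mathbb{R}^n$ with its standard positive definite form, and to obtain both assertions from the covering map furnished by the adjoint action. First I would recall that $Cl(n)$ is generated by $\mathbb{R}^n$ subject to $v^2=-|v|^2$, carries a $\mathbb{Z}_2$-grading $Cl(n)=Cl^0\oplus Cl^1$, and that $Spin(n)$ may be defined as the set of products of an even number of unit vectors. The adjoint action $\rho(x):v\mapsto xvx^{-1}$ sends $\mathbb{R}^n$ to itself and is orthogonal, so it defines a homomorphism $\rho:Spin(n)\to O(n)$.

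Next I would establish the short exact sequence in three steps. For surjectivity onto $SO(n)$: for a unit vector $u$ the map $v\mapsto -uvu^{-1}$ is the reflection across the hyperplane $u^\perp$, so a product of an even number of unit vectors maps to a product of an even number of reflections; by the Cartan--Dieudonné theorem these are exactly the elements of $SO(n)$. For the kernel: $\rho(x)=\mathrm{id}$ forces $x$ to commute with every $v\in\mathbb{R}^n$, hence to lie in the centre, and intersecting with $Spin(n)$ leaves only $\{+1,-1\}\simeq\mathbb{Z}_2$. Since $\rho$ is continuous, surjective onto the connected group $SO(n)$ with kernel $\mathbb{Z}_2$, the sequence
\begin{align*}
\{e\}\to\mathbb{Z}_2\to Spin(n)\to SO(n)\to\{e\}
\end{align*}
is exact.

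For the case $n=4$ I would switch to quaternions. Identifying $\mathbb{R}^4\simeq\mathbb{H}$ and the unit quaternions with $SU(2)$, I define $\Psi:SU(2)\times SU(2)\to SO(4)$ by $\Psi(p,q)(x)=px\bar q$ for $x\in\mathbb{H}$. This takes values in $SO(4)$ because $|px\bar q|=|p|\,|x|\,|q|=|x|$ and the map is continuous with connected domain; it is a homomorphism by associativity of quaternion multiplication. A short computation shows that $\Psi(p,q)=\mathrm{id}$ forces $p=q=\pm 1$ with equal signs, so $\ker\Psi=\{(1,1),(-1,-1)\}\simeq\mathbb{Z}_2$, and a dimension count (or surjectivity onto the reflections generating $SO(4)$) shows $\Psi$ is onto. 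Thus $SU(2)\times SU(2)$ is a connected double cover of $SO(4)$; being simply connected it is the universal cover, and by uniqueness of the universal cover it coincides with $Spin(4)$, giving $Spin(4)\simeq SU(2)\times SU(2)$.

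The main obstacle is bookkeeping rather than conceptual: fixing a Clifford convention consistent throughout so that the reflection formula and the computation of the kernel come out with the right signs, and confirming that the quaternionic cover $\Psi$ is genuinely the two--fold and not a higher cover. This last point is exactly what the kernel computation $\ker\Psi\simeq\mathbb{Z}_2$ together with $\pi_1(SO(4))\simeq\mathbb{Z}_2$ guarantees, so that $SU(2)\times SU(2)$ is forced to be $Spin(4)$ rather than some other intermediate group.
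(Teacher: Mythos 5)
Your argument is correct, and it is the standard one: realise $Spin(n)$ inside the Clifford algebra, use the reflection formula plus Cartan--Dieudonn\'e for surjectivity onto $SO(n)$, compute the kernel as the centre intersected with the unit even elements, and for $n=4$ exhibit the quaternionic double cover $\Psi(p,q)(x)=px\bar q$ and invoke simple connectivity of $SU(2)\times SU(2)$ to identify it with the universal cover $Spin(4)$. There is nothing in the paper to compare this against: the lemma is stated without proof and simply referred to the literature (Varadarajan's book on supersymmetry), so your write-up supplies an argument the paper omits entirely. The only points that genuinely need care are the ones you already flag yourself: the sign conventions (with $v^2=-|v|^2$ the map $v\mapsto -uvu^{-1}$ is the reflection in $u^\perp$, and the signs cancel for products of an even number of unit vectors, so the untwisted adjoint action suffices on $Spin(n)$), and the verification that $Spin(4)\to SO(4)$ is the \emph{connected} double cover (e.g.\ via the path $\cos t+\sin t\,e_1e_2$ joining $+1$ to $-1$), which is what lets the uniqueness of the universal cover close the argument.
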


This shows that subgroups of $SO(4)$ correspond to some subgroups of $SU(2)\times SU(2)$ by considering the preimage. The study of such subgroups was done by P.DuVal in \cite{du1964homo} where he shows that they are classified by group isomorphisms $A/A_0\to B/B_0$ where $A_0\vartriangleleft A\subset SU(2)$ and $B_0\vartriangleleft B\subset SU(2)$.
Putting the pieces together we understand that regular polytopes in dimension 4 give us, through their groups of symmetry, core diagrams of double groups.It is an important question to determine what or how many double groups share the same core diagram. We have managed to dodge this difficulty until now thanks to the availability of maximal double groups in the exclusive case but the question gets more complicated for arbitrary core diagrams. This done, we will be more prone to understanding which ones are of significant importance for the regular polytopes. Moreover it would classify double groupoids and is seen by the author as constituting a possible program for future research.

														  \section{Representations}

  As important to us as double groups are their representations. In dimension 3, the structure of the category of representations of certain algebraic entities, called quantum groups, gives us the basic ingredient to build a TQFT through state-sums. In the same fashion, some invariants of links and knots are given by categories of representations. Understanding the representations of our double groups will be important if we want to build TQFTs out of them. But first, let's recall what a representation of a group is. The basic notion of representation, called permutation representation, is a functor from the group G, viewed as a category with one object, to the category of sets :
\begin{align*}
G\to\Cat{Set}
\end{align*}
or in more classical terms a group morphism from G to \Cat{Set}[S], the group of automorphisms of S. A linear representation, on the other hand, is a functor to the category of vector spaces over a field $\mathbb{K}$, most often chosen as $\mathbb{R}$ or $\mathbb{C}$ :
\begin{align*}
G\to\Cat{$\mathbb{K}$-Mod}
\end{align*}
or in more classical terms it is a group morphism to the group of invertible linear maps on a vector space V, \Cat{$\mathbb{K}$-Mod}[V]. The study of representations of 2-groups has pushed people to consider specific higher dimensional analogs of vector spaces, such as \Cat{2-Vect} \cite{kapranov19942,barrett2006categorical,Baez2008}, or \Cat{Meas} \cite{Crane2007c}, while the basic notion of permutation representation was unsurprizingly thought of as functors to \Cat{Cat}, the prime example of a 2-category. Since 2-categories are special cases of double categories, representations of 2-groups should be special cases of representations of double groups and it therefore gives us suggestions to consider.

 It seems correct to think of permutation representations of double groups as functors from a double group G to \underline{\Cat{Cat}}, the "quintet" double category on the 2-category \Cat{Cat}, since it is the first double category at hand.
\begin{Def}Let \Cat{C} be a 2-category. Then the \textbf{quintet double category} of \Cat{C} is the double category whose :
	\begin{itemize}
		\item objects are the objects in \Cat{C}.
		\item horizontal arrows are the arrows of \Cat{C}
		\item vertical arrows are the arrows of \Cat{C}
		\item squares are 2-cells of $\Cat{C}:\, s_h(S)t_v(S) \to s_v(S)t_h(S)$  for a square S.
	\end{itemize}
\end{Def}
A square in a quintet double category then looks like the following :\clearpage
\begin{figure}[!hbtp]
	\centering
	\input{./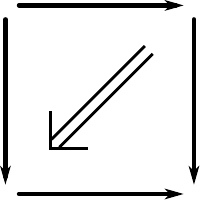_tex}
\end{figure}
with $\eta$ a natural transformation $fg\to hj$.\\
The quintet double category contains a quintet double groupoid just as a category contains its groupoid of invertible maps. For any object $a\in \Cat{C}$, one can then consider the double group \underline{\Cat{C}}[a], the restriction of the previous double groupoid to the object $a$. A representation of a double group would then be a double functor to $\underline{\Cat{C}}[a]$, for a given $a\in\Cat{C}$

Another possibility for a target of the representation functor is given by the fact that the homsets of the category of double categories are themselves double categories. Such representations of a double group would then contain a representation of the horizontal group as horizontal double natural transformations, a representation of the vertical group as vertical double natural transformations and squares as comparisons.\\
Both examples above have equivalents for $\mathbb{K}$-linear categories, k-linear functors etc, which would give notions of linear representations.
A third type of double category we may want to consider is the double category of functors and profunctors, cf \cite{grandis1999limits}. Simple examples must be taken to see the relevance of the above representation theories.

															 \section{Conclusion}
The hope that is induced by these examples of double groups is that their representations will be different enough from current theories that they will give us interesting new TQFTs. This program is in its infancy since the higher cubical category of 3-cobordisms with corners has not been described yet. In fact, it is what spurred this paper in the first place. Once it has been described, and the representation theory of double groups studied further, it may appear that they match in certain cases.\\

Another direction to look into is the "groupoidification" program of Baez \cite{Baez2009}. In this program spans of groupoids, which is what our core diagrams are, give linear maps between vector spaces that are determined by groupoids. It is an attempt to translate quantum mechanics to an algebraic theory with no mention of the continuum. It would be interesting to study the role of double groupoids in such a program, and it is our hope that future work will unveil new mathematics.
\clearpage
\appendix
														  \section{Internalization}
This first appendix will present the internalization process for categories and provide the proofs that double functors, natural transformations and comparisons have well defined compositions.
	\begin{Def}	An \textbf{internal category} in a category \Cat{C} with pullbacks is a sextuple $(a,b,s,t,\imath,\circ)$, where :
		\begin{itemize}
			\item a and b are objects of \Cat{C}.
		 	\item $(s,t,\circ,\imath)$ are arrows in \Cat{C} respectively called source, target, composition and identity as in the picture :
		 		\begin{figure}[!hbtp]
					\centering
					\input{./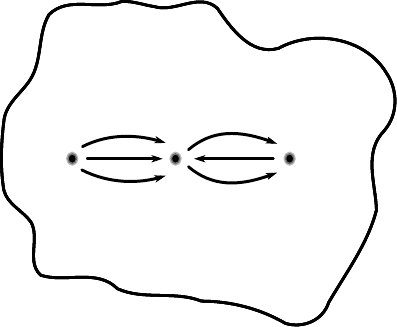_tex}
				\end{figure}\\
			where $b_2$, $\Lind{b}\pi$ and $\pi_b$ are the pull-back of $t$ and $s$.
				\begin{figure}[!hbtp]
					\centering
					\input{./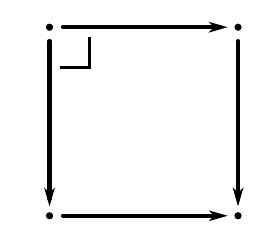_tex}
				\end{figure}\\
			More generally $b_n$ is a chosen limit of :
 				\begin{figure}[!hbtp]
					\centering
					\input{./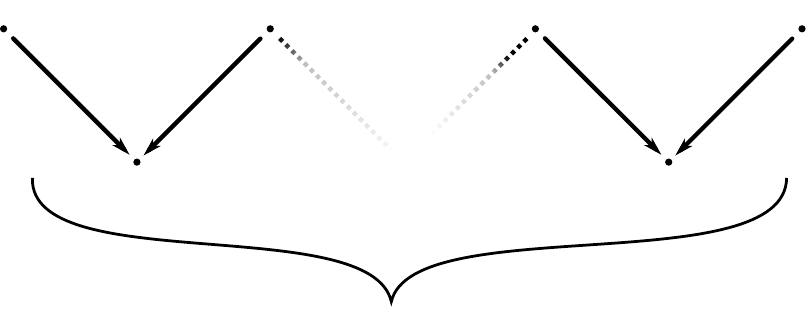_tex}
				\end{figure}
			\item some relations are satisfied:\clearpage
 				\begin{figure}[!hbtp]
					\centering
					\input{./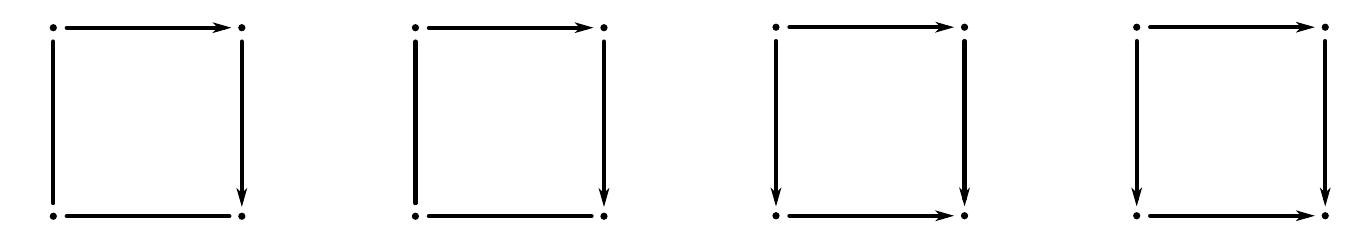_tex}
				\end{figure}
				\begin{figure}[!hbtp]
					\centering
					\input{./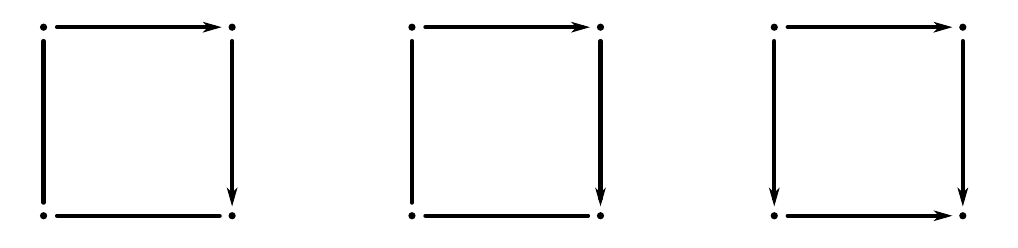_tex}
				\end{figure}
			\end{itemize}
	\end{Def}

 	This diagrammatic definition taken in \Cat{Set} yields the usual notion of a small category. Let's take a few lines to show why this is the case. The set $a$ is the set of objects of a category, the set $b$ its set of morphisms. The maps $s$ and $t$, sometimes denoted $d$ and $c$ for domain and codomain, associate respectively a source and a target to morphisms. The set $b_2$, pullback of the pair $(t,s)$, contains pairs of composable morphisms. The projections $\Lind{b}\pi$ and $\pi_b$ give respectively the first and second elements of the pairs and the map $\circ$ is the composition. The map $\imath$ selects an identity amongst morphisms with identical source and targets for every object. The axioms are then in order, the source and target axioms for identities and composition, the right and left unit axioms and finally the associativity axiom. Remark that the source and target axioms are a necessary condition for the unique maps used in the other axioms to exist.\\
 	We will soon consider the process of taking internal categories inductively, i.e. look at internal categories in the category of (internal categories in the category of(...(internal categories in \Cat{Set}))..)) but for now let's review a few facts about the construction. From now on the projections from the pullback will be omitted in the picture, and hence we draw an internal category in a category with pullbacks as :
 		\begin{figure}[!h]
		\centering
		\input{./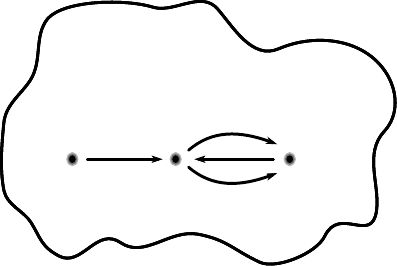_tex}
	\end{figure}\\
 	There are two ``degenerate'' versions of a category. The first one happens when the source or target morphism 
 	(s or t) is taken to be the identity. 
 	\begin{Lem}If any of $s,t,\imath$ are identities then they all are and so is the composition. The relations are then 
 	  trivially satisfied.\end{Lem}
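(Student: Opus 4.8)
The statement bundles two assertions: first, that the identity property propagates among $s$, $t$ and $\imath$, so that all three are identities (which in particular forces $a=b$); and second, that once this holds the composition $\circ$ is an identity too and every axiom degenerates. The plan is to drive everything from the two source/target axioms for identities, namely $\imath s = id_a$ and $\imath t = id_a$, and then to exploit the pullback defining $b_2$ together with one of the unit laws to pin down $\circ$.

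First I would observe that for any of the three structure maps to be an identity at all, its domain and codomain must agree: since $s,t:b\to a$ and $\imath:a\to b$, each of the hypotheses $s=id$, $t=id$, $\imath=id$ already forces $a=b$, after which the map in question is $id_a$. With this in hand I would run the short propagation. Assuming $s=id_a$, the axiom $\imath s=id_a$ reads, in diagrammatic order, $\imath=id_a$, and then $\imath t=id_a$ reads $t=id_a$. The case $\imath=id_a$ is even more direct: $\imath s=id_a$ and $\imath t=id_a$ collapse to $s=id_a$ and $t=id_a$ respectively. The case $t=id_a$ is the mirror image of the $s$ case, using the same two axioms read in the other order. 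So all three starting hypotheses land on $s=t=\imath=id_a$ with $a=b$.

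Next I would settle the composition. With $s=t=id_a$ the object $b_2=b\,\Lind{t}\!\times_s b$ is the pullback of $id_a$ against $id_a$; by the universal property this is canonically $a$ itself with both projections equal to $id_a$. I would then invoke one unit law, say $(s\imath\ulcorner id_b)\circ=id_b$: here $s\imath=id_a$, so under the identification $b_2\cong a$ the paired map into $b_2$ is just $id_a$, and the law forces $\circ=id_a$. Hence $\circ$ is an identity as well. At that point every remaining axiom, for sources, targets, units and associativity, reduces to the tautology $id_a=id_a$, which is what ``trivially satisfied'' means.

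The only step that needs genuine care, rather than bookkeeping, is the identification of the chosen pullback $b_2$ with $a$: because the definition works with a fixed choice of $b_2$ and its projections, I must use universality to produce the canonical isomorphism $b_2\cong a$ and check that $\circ$ transported across it is $id_a$. The rest is purely formal, the main thing to watch being the diagrammatic composition order so that $\imath s$ and $\imath t$ are read correctly; I expect no obstacle beyond this.
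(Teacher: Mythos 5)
Your proof is correct and follows essentially the same route as the paper's: propagate the identity property through the axioms $\imath s=id_a$ and $\imath t=id_a$, then observe that the pullback $b_2$ collapses (projections become identities) so that a unit law forces $\circ$ to be an identity. You are somewhat more careful than the paper — treating all three starting hypotheses explicitly and justifying the identification $b_2\cong a$ via universality — but these are elaborations of the same argument, not a different one.
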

 	\begin{proof}If $s=id$, then condition 1 forces $\imath=id$ and $t=id$ follows from condition 2. The projections 
 	from the pullback are then identities as well, which forces the composition to be the identity.
 	\end{proof}
 	\begin{Def}
 		An internal category whose defining morphisms are identities is called \textbf{discrete}.
 	\end{Def}
 	A discrete internal category in \Cat{C} does not bear any more structure than the objects of \Cat{C}. The second  degenerate case is the case where the object of objects, "a" in our definition, is a terminal object. The source and target morphisms are then unique and the structure becomes :
 		\begin{figure}[!hbtp]
		\centering
		\input{./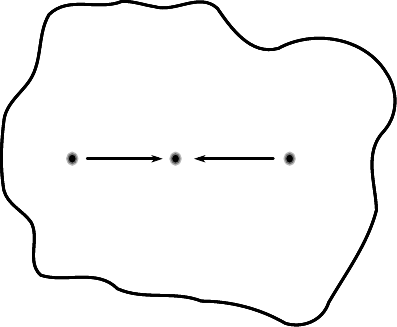_tex}
	\end{figure}
	\begin{Def}An internal category whose vertices object is terminal is called an \textbf{internal monoid}.
	\end{Def}

	Internal functors are sets of arrows that intertwine two instances of the structure, i.e. go from each object of the first 
 	instance to its equivalent in the second in such a way than all possible diagrams commute. 
 	\begin{Def} An \textbf{internal functor} between two internal categories is a triple of morphisms $(F_0,F_1,F_2)$:
 		\begin{figure}[!hbtp]
			\centering
			\input{./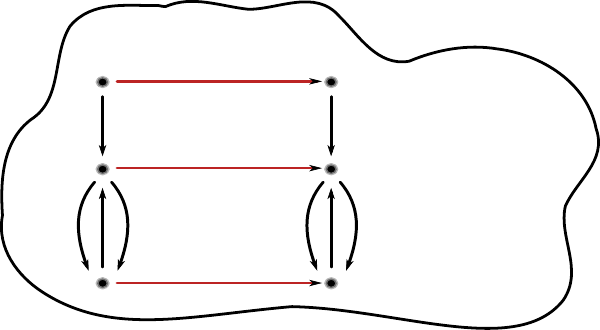_tex}
		\end{figure}\\
 		such that all possible diagrams commute, i.e.:\clearpage
 		\begin{figure}[!hbtp]
			\centering
			\input{./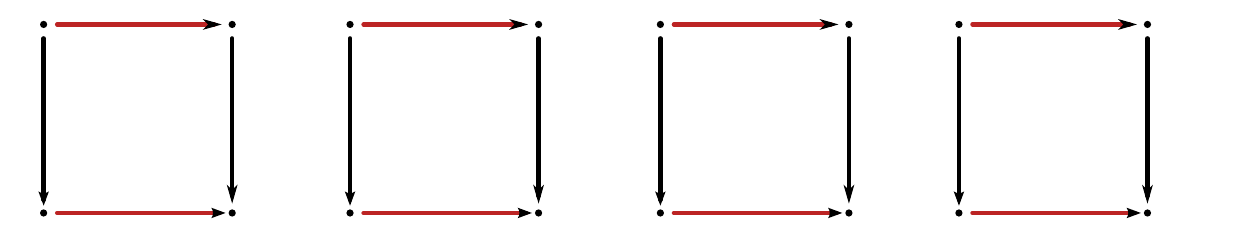_tex}
\vspace{1mm}
			\input{./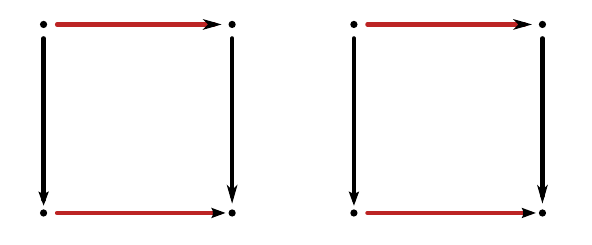_tex}
		\end{figure}
	\end{Def}
 		Note that since $b_2$ is a pullback, the first four conditions force $F_2$ to be the map factoring $\Lind{b}\pi 	
 		F_1$ and $\pi_bF_1$, as shown in the following picture :
 	 	\begin{figure}[!hbtp]
		  \centering
		  \input{./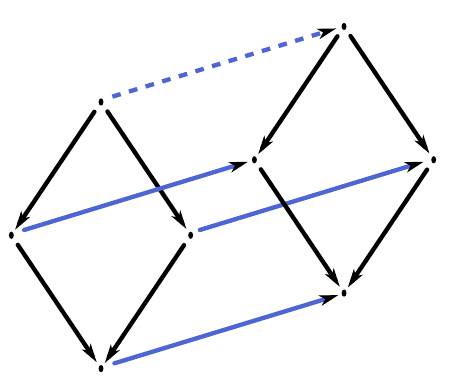_tex}
	  \end{figure}\\
Actually the definition should have a map between the chosen limits for any n but the relations would impose that they are the unique map given by the pullback. It will be understood from now on that the induced unique arrow is written $F_2:=\Lind{b}\pi F_1 \ulcorner \pi_b F_1$, or more generally $F_n:=\Lind{b}\pi F_1 \ulcorner\ldots\ulcorner \pi_b F_1$ n times, and we will give a functor by a pair of maps only $F:=(F_0,F_1)$.\newline
  Note that $(id_a,id_b)$ is always a functor and that functors have an obvious composition. Composing $F:=(F_0,F_1)$
  with $G:=(G_0,G_1)$ gives $F:=(F_0\,G_0,F_1\,G_1)$ and since it is reduced to the composition in the underlying 
  category it is therefore associative and with unit.\\
  Then, given a category \Cat{C}, one can build \Cat{IntCat(C)} whose objects are internal categories and arrows internal functors. In the case where \Cat{C} is \Cat{Set}, this category is simply \Cat{Cat}, the category of all small categories.\\
  As the real appeal for categories came with natural transformations, internal categories should have little to say 
  were we to stop here. The construction of natural transformations on categories and functor allows us to generalize 
  and define the notion internally.
  \begin{Def}Given two internal functors $F$ and $G$, an \textbf{internal natural transformation} is a morphism 
  $\omega$ :
  \begin{figure}[!hbtp]
		\centering
		\input{./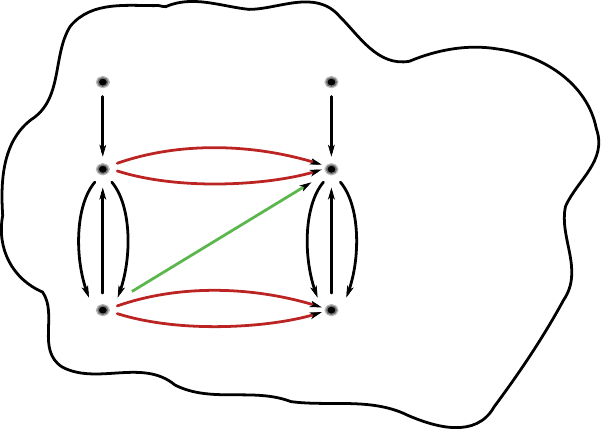_tex}
	\end{figure}\\
  such that the following diagrams commute:
  \begin{figure}[!hbtp]
		\centering
		\input{./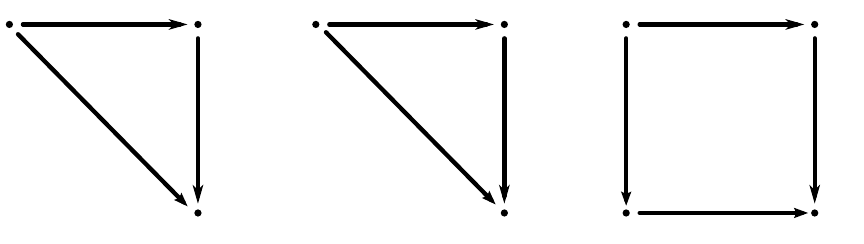_tex}
	\end{figure}
	\end{Def}
  Once again the first two conditions allow the unique maps used in the third one to exist. They are mere 
  prerequisites for the naturality axiom to be stated, and once again they concern the graph structure. For clarity, 
  let's draw the pullback diagram inducing $F_1\,\ulcorner t\omega$:
  \begin{figure}[!hbtp]
		\centering
		\input{./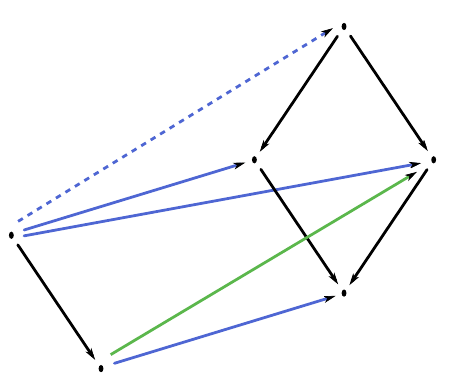_tex}
	\end{figure}\\
  Note that $F_0\imath$ is a natural transformation between $F$ and itself. Natural transformations compose 
  two different ways, usually called horizontal and vertical for reasons that should soon become obvious.
  \begin{Lem} Let $\omega:F\to G$ and $\gamma:G\to H$ be internal natural transformations, then $\omega \bullet 
  	\gamma:=(\omega\ulcorner\gamma) \circ'$ is an internal natural transformation $F\to H$. This composition is 
  	associative and has identities given by $F\to F_0\imath$.
  \end{Lem}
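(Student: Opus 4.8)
The plan is to deduce everything from the internal category axioms of the target $(s',t',\imath',\circ')$, using the universal property of the pullbacks $b'_2$ and $b'_3$ to commute the source and target maps past the pairing operation $\ulcorner$. First I would record the elementary fact that for components $\alpha,\beta:a\to b'$ with $\alpha\ulcorner\beta$ defined one has $t(\alpha\ulcorner\beta)=t\alpha\ulcorner t\beta$ and $s(\alpha\ulcorner\beta)=s\alpha\ulcorner s\beta$: both sides have equal composites with $\Lind{b'}\pi$ and with $\pi_{b'}$, so they agree by uniqueness. The pairing $\omega\ulcorner\gamma$ is itself legitimate because $\omega t'=G_0=\gamma s'$, so $\omega\bullet\gamma:a\to b'$ is defined; and the source/target axioms $\circ' s'=\Lind{b'}\pi\, s'$ and $\circ' t'=\pi_{b'}\,t'$ give $(\omega\bullet\gamma)s'=\omega s'=F_0$ and $(\omega\bullet\gamma)t'=\gamma t'=H_0$, so $\omega\bullet\gamma$ already has the boundary of a transformation $F\to H$.

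The heart of the proof is the naturality square, which I would obtain by pasting the two given squares and rebracketing. Writing $P:=\omega\bullet\gamma$ and using the elementary fact above, $tP=(t\omega\ulcorner t\gamma)\circ'$ and $sP=(s\omega\ulcorner s\gamma)\circ'$. I would then form the composable triple $F_1\ulcorner t\omega\ulcorner t\gamma:b\to b'_3$ and compute
\begin{align*}
(F_1\ulcorner tP)\circ'
&=\bigl(F_1\ulcorner(t\omega\ulcorner t\gamma)\circ'\bigr)\circ'
=\bigl((F_1\ulcorner t\omega)\circ'\ulcorner t\gamma\bigr)\circ'\\
&=\bigl((s\omega\ulcorner G_1)\circ'\ulcorner t\gamma\bigr)\circ'
=\bigl(s\omega\ulcorner(G_1\ulcorner t\gamma)\circ'\bigr)\circ'\\
&=\bigl(s\omega\ulcorner(s\gamma\ulcorner H_1)\circ'\bigr)\circ'
=\bigl((s\omega\ulcorner s\gamma)\circ'\ulcorner H_1\bigr)\circ'
=(sP\ulcorner H_1)\circ',
\end{align*}
where in each of the three rebracketings the equality is the associativity axiom $(\circ'\ulcorner\pi_{b'})\circ'=(\Lind{b'}\pi\ulcorner\circ')\circ'$ precomposed with the appropriate map into $b'_3$, and the two middle substitutions are the naturality of $\omega$ and of $\gamma$ respectively. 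This is exactly the naturality identity for $\omega\bullet\gamma$, so it is an internal natural transformation $F\to H$.

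For associativity I would apply the same triple rebracketing to $\omega\ulcorner\gamma\ulcorner\delta:a\to b'_3$: the associativity axiom for $\circ'$ gives $\bigl((\omega\ulcorner\gamma)\circ'\ulcorner\delta\bigr)\circ'=\bigl(\omega\ulcorner(\gamma\ulcorner\delta)\circ'\bigr)\circ'$, which is $(\omega\bullet\gamma)\bullet\delta=\omega\bullet(\gamma\bullet\delta)$. For the units, the identity transformation on $F$ is $F_0\imath'$ (the target identity, written $F_0\imath$ in the statement); since $\gamma s'=F_0$ one checks by the universal property that $F_0\imath'\ulcorner\gamma=\gamma\,(s'\imath'\ulcorner\mathrm{id}_{b'})$, whence $(F_0\imath'\ulcorner\gamma)\circ'=\gamma\bigl((s'\imath'\ulcorner\mathrm{id}_{b'})\circ'\bigr)=\gamma$ by the left unit axiom; the right unit $\omega\bullet G_0\imath'=\omega$ is symmetric, using $\omega t'=G_0$ and the right unit axiom. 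The one step that requires genuine care — and the main obstacle — is the naturality computation: each rebracketing is licensed only once one verifies that the inner composites carry the source and target needed for the outer pairing to land in $b'_2$, which is precisely where the source/target axioms for $\circ'$ and the elementary fact on $t,s$ past $\ulcorner$ are doing the real bookkeeping.
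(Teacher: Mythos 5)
Your proposal is correct and follows essentially the same route as the paper: the same source/target check via the projections, the same naturality chain obtained by forming the triple $F_1\ulcorner t\omega\ulcorner t\gamma$ and rebracketing with the associativity axiom while substituting the naturality of $\omega$ and then of $\gamma$, and the same treatment of associativity and units via the unit axioms (including the observation that the identity is really $F_0\imath'$). The only difference is presentational — you isolate the distribution of precomposition over $\ulcorner$ as an explicit preliminary fact, which the paper uses silently.
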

  \begin{proof}First let's show that it is a natural transformation from $F$ to $H$ :
  \begin{align*}
  	(\omega \bullet \gamma)s'
  	&=\bigl((\omega\ulcorner\gamma)\circ'\bigr)s'\\
  	&\bigl((\omega\ulcorner\gamma)\Lind{b'}\pi\bigr)s'\\
  	&=\omega s'=F_0
  	\\
  	(\omega \bullet \gamma)t'
  	&=\bigl((\omega\ulcorner\gamma)\circ'\bigr)t'\\
  	&\bigl((\omega\ulcorner\gamma)\pi_{b'}\bigr)t'\\
  	&=\gamma t'=H_0
  	\\
  	(F_1\ulcorner (t(\omega\bullet\gamma)))\circ'
  	&=(F_1\ulcorner (t(\omega\ulcorner\gamma)\circ'))\circ'\\
  	&=(F_1\ulcorner (t(\omega\ulcorner\gamma)))(1\ulcorner\circ')\circ'\\
  	&=(F_1\ulcorner t\omega \ulcorner t\gamma)(1\ulcorner\circ')\circ'\\
  	&=(F_1\ulcorner t\omega \ulcorner t\gamma)(\circ'\ulcorner 1)\circ'\\
  	&=(((F_1\ulcorner t\omega)\circ') \ulcorner t\gamma)\circ'\\
  	&=(((s\omega\ulcorner G_1)\circ') \ulcorner t\gamma)\circ'\\
  	&=(s\omega\ulcorner G_1\ulcorner t\gamma)(\circ'\ulcorner 1)\circ'\\
  	&=(s\omega\ulcorner G_1\ulcorner t\gamma)(1\ulcorner \circ')\circ'\\
  	&=(s\omega\ulcorner ((G_1\ulcorner t\gamma)\circ'))\circ'\\
  	&=(s\omega\ulcorner ((s\gamma\ulcorner H_1)\circ'))\circ'\\
  	&=(s\omega\ulcorner s\gamma\ulcorner H_1)(1\ulcorner\circ')\circ'\\
  	&=((s(\omega\ulcorner\gamma))\ulcorner H_1)(\circ'\ulcorner 1)\circ'\\
  	&=(s(\omega\bullet\gamma)\ulcorner H_1)\circ'
  \end{align*}
  Then from the uniqueness of the pullback, we find that $(F\ulcorner G \ulcorner H)((\Lind{b'_2}\pi\circ')\ulcorner 
  \pi_{b'})=((F\ulcorner G)\circ')\ulcorner H$, and then :
  \begin{align*}
  	(\omega\bullet \gamma)\bullet \lambda
  	&=(((\omega\ulcorner \gamma)\circ')\ulcorner \lambda)\circ'\\
  	&=(\omega\ulcorner \gamma \ulcorner \lambda)((\Lind{b'_2}\pi\circ')\ulcorner \pi_{b'})\circ'\\
  	&=(\omega\ulcorner \gamma \ulcorner \lambda)(\Lind{b'}\pi\ulcorner(\pi_{b'_2}\circ'))\circ'\\
  	&=(\omega\ulcorner((\gamma\ulcorner \lambda)\circ'))\circ'\\
  	&=\omega\bullet(\gamma\bullet \lambda)
  \end{align*}
  Therefore the uniqueness of the pullback and the associativity of composition insure the associativity of $\bullet$.
  Finally 
  \begin{align*}
  	(F_0\imath)\bullet \omega
  	&=(f_0\imath\ulcorner \omega)\circ'\\
  	&=((\omega s' i')\ulcorner \omega)\circ'\\
  	&=\omega(s'i'\ulcorner id_{b'})\circ'\\
  	&=\omega\\
  	\omega\bullet(G_0\imath)
  	&=(\omega\ulcorner G_0\imath)\circ'\\
  	&=\omega(id_{b'}\ulcorner t'i')\circ'\\
  	&=\omega
  \end{align*}
  So that as claimed $F_0\imath$ is the identity on $F$.
  \end{proof}
  Suppose now that you have the following situation :
  \begin{figure}[!hbtp]
		\centering
		\input{./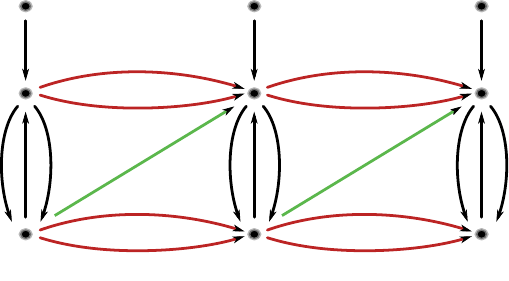_tex}
	\end{figure}\\
 	then one can create $\omega_H:=\omega H_1:\,FH\to GH$ and $\Lind{G}\theta:=G_0\theta:GH\to GL$\footnote{this is the 
 	whiskering}, which are natural transformations as well, and compose them with the previous defined 
 	composition, so that $\omega \diamond \theta := \omega_H \bullet \Lind{G}\theta$. 
 	\begin{Lem}The above constructed composition is associative and unital.
 	\end{Lem}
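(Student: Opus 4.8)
The plan is to reduce both claims to the associativity and unitality of the vertical composition $\bullet$ established in the previous lemma, exploiting the fact that $\diamond$ is assembled from $\bullet$ by whiskering. The only genuinely new ingredient needed is a pair of distributivity identities asserting that whiskering — precomposition by a functor's object part on the left, postcomposition by a functor's arrow part on the right — commutes with $\bullet$.

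First I would record the whiskering calculus. On the right, for natural transformations $\alpha,\beta$ valued in a category with composition $\circ$ and composable under $\bullet$, and a functor $P=(P_0,P_1)$ out of that category into one with composition $\circ'$, the functor axiom $\circ\,P_1=P_2\,\circ'$ together with the defining equations of the pairing give
\begin{align*}
 (\alpha\bullet\beta)\,P_1 &= (\alpha\ulcorner\beta)\,\circ\,P_1 = (\alpha\ulcorner\beta)\,P_2\,\circ' = (\alpha P_1\ulcorner\beta P_1)\,\circ' = \alpha_P\bullet\beta_P.
\end{align*}
On the left, for a functor $G$ whose object part $G_0$ precomposes, the universal property of the pullback yields $G_0(\alpha\ulcorner\beta)=(G_0\alpha\ulcorner G_0\beta)$, whence $\Lind{G}(\alpha\bullet\beta)=\Lind{G}\alpha\bullet\Lind{G}\beta$. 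Alongside these I would note the purely functorial identities $\omega_{HP}=(\omega_H)_P$ and $\Lind{GL}\psi=\Lind{G}\Lind{L}\psi$, coming from $(HP)_1=H_1P_1$ and $(GL)_0=G_0L_0$.

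With this calculus in hand, associativity is immediate. Expanding $\omega\diamond\theta=\omega_H\bullet\Lind{G}\theta$ and applying the two distributivity identities, both $(\omega\diamond\theta)\diamond\psi$ and $\omega\diamond(\theta\diamond\psi)$ collapse to
\begin{align*}
 \omega_{HP}\bullet\Lind{G}\theta_P\bullet\Lind{GL}\psi,
\end{align*}
so associativity of $\diamond$ is nothing but associativity of $\bullet$. For unitality, the candidate unit is the identity natural transformation $\imath$ on the identity functor. Whiskering by an identity functor is the identity operation ($\omega_{id}=\omega$, $\Lind{id}\theta=\theta$), while the functor identity axiom $\imath F_1=F_0\imath'$ turns $\imath$ whiskered by $F$ into exactly the $\bullet$-identity $F_0\imath'$ of the previous lemma; the two unit equations for $\diamond$ then reduce to the two unit equations for $\bullet$.

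The step I expect to be the main obstacle is the right-hand distributivity $(\alpha\bullet\beta)P_1=\alpha_P\bullet\beta_P$: it is the only place where the functor axiom for preservation of composition must be combined with the universal property of $b_2$. Concretely, getting $(\alpha\ulcorner\beta)\,P_2=(\alpha P_1\ulcorner\beta P_1)$ requires checking it against the projections $\Lind{b'}\pi$ and $\pi_{b'}$ via the functor square and the identities $(\alpha\ulcorner\beta)\Lind{b}\pi=\alpha$, $(\alpha\ulcorner\beta)\pi_b=\beta$, all while respecting the diagrammatic-order convention. Once that identity and its left-hand analogue are secure, everything else is bookkeeping layered on top of the previous lemma.
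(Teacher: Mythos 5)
Your proof is correct and follows essentially the same route as the paper: unfold $\diamond$ into $\bullet$ and whiskerings, push the right whisker through the composition via the functor axiom and the universal property of the pullback, and regroup using associativity of the target composition; your two distributivity identities are exactly the middle steps of the paper's inline chain of equalities, and your treatment of units via $\imath F_1=F_0\imath'$ matches the paper's computation. The only difference is presentational — you factor out the whiskering calculus as named identities and cite the previous lemma for associativity of $\bullet$, where the paper re-derives it inline with $(1\ulcorner\circ'')\circ''=(\circ''\ulcorner 1)\circ''$.
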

 	\begin{proof}
 		Let $\omega:\,F\to G$, $\theta:\,H\to L$ and $\eta:\,K\to M$, then :
	\begin{align*}
		(\omega\diamond\theta)\diamond\eta&=(\omega_H\bullet \Lind{G}\theta)_K\bullet\Lind{GL}\eta\\
		&=(((\omega H_1\urcorner G_0\theta)\circ' K_1)\ulcorner G_0L_0\eta)\circ''\\
		&=(((\omega H_1K_1\urcorner G_0\theta K_1)\circ'')\ulcorner G_0L_0\eta)\circ''\\
		&=(\omega H_1K_1\urcorner G_0\theta K_1\ulcorner G_0L_0\eta)(1\ulcorner \circ'')\circ''\\
		&=(\omega H_1K_1\urcorner G_0\theta K_1\ulcorner G_0L_0\eta)(\circ''\ulcorner 1)\circ''\\
		&=(\omega H_1K_1\urcorner G_0((\theta K_1\ulcorner L_0\eta)\circ''))\circ''\\
		&=\omega_{HK}\bullet\Lind{G}(\theta_K\bullet\Lind{L}\eta)\\
		&=\omega\diamond(\theta\diamond\eta)\\
		\omega\diamond \imath'&=\omega_{id}\bullet\Lind{G}\imath'\\
		&=(\omega\ulcorner G_0\imath')\circ'\\
		&=(\omega\ulcorner\omega t\imath')\circ'\\
		&=\omega(id\ulcorner t\imath')\circ'\\
		&=\omega
	\end{align*}
The left unit proof follows the same pattern as the right unit one.
 	\end{proof}
 	
	Internal categories in the category of categories are double categories, internal functors are double functors and internal natural transformations are horizontal double natural transformations. The fact that the category of categories is a 2-category is what gives us the extra structure of vertical natural transformation and comparisons. Vertical natural transformations are entities that share the same axioms as internal functors but where arrows are replaced by 2-cells. Comparisons share the same axioms as internal natural transformations but, once again, with 2-cells in place of arrows.

														\section{Proof of Theorem}

Here we will prove the theorem of Section 3. We will start with a definition.
\begin{Def}
	Let \Cat{2-sub} be the category whose objects are ordered triples $(G,H,K)$ consisting of a groupoid G and two of its subgroupoids H and K, and whose arrows $f:\, (G,H,K)\to (G',H',K')$ are functors $f:\,G\to G'$ such that $f(H)\subset H'$ and $f(K)\subset K'$.
\end{Def}
\begin{Def}
	Let $(G,H,K)\in\Cat{2-Sub}$, and define $\Gamma (G,H,K)$  as the double groupoid defined by :
	\begin{itemize}
		\item The horizontal arrow groupoid is H.
		\item The vertical arrow groupoid is V.
		\item There exist a square with sources $(h,k)$ and targets $(h',k')$ if anf only if $hk'=kh'$
	\end{itemize}
Define $\Gamma(f)$ by $\Gamma(f)(hk'=kh')=\bigl(f(h)f(k')=f(k)f(h')\bigr)$
\end{Def}

\begin{Lem}$\Gamma$ is a functor $\Cat{2-Sub}\to\Cat{slimDblGpd}$.
\end{Lem}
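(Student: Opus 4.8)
The plan is to verify, in turn, that $\Gamma(G,H,K)$ is a slim double groupoid, that $\Gamma(f)$ is a double functor, and that $\Gamma$ preserves identities and composition. (I read the second bullet of the definition as declaring the vertical arrow groupoid to be $K$, the symbol $V$ being a slip.) First I would fix conventions so that a square with $H$-edges $h,h'$ and $K$-edges $k,k'$ has corners $s(h)=s(k)$, $t(h)=s(k')$, $t(k)=s(h')$, $t(k')=t(h')$; then both $hk'$ and $kh'$ are defined as arrows with common source and target, the defining relation $hk'=kh'$ is meaningful, and the source/target compatibility equations of the double category definition hold automatically. Identity squares are those whose two $H$-edges (resp. two $K$-edges) coincide and whose other pair are identities; these satisfy the relation trivially.

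The one computation that uses the algebra is closure under pasting. Given horizontally composable squares $(h_1,k,h_1',k_1)$ and $(h_2,k_1,h_2',k_2)$ with $h_1k_1=kh_1'$ and $h_2k_2=k_1h_2'$, the candidate composite has $H$-edges $h_1h_2,\ h_1'h_2'$ and $K$-edges $k,\ k_2$, and I would check $h_1h_2k_2=h_1k_1h_2'=kh_1'h_2'=k(h_1'h_2')$ using only associativity in $G$; the vertical case is identical, giving $h(k_1'k_2')=(k_1k_2)h'$. Hence $\circ_{2,h}$ and $\circ_{2,v}$ land in the set of squares, and their associativity and unitality are inherited from $G$. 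Since a square is by definition nothing but a boundary satisfying $hk'=kh'$, there is at most one square per boundary, so $\Gamma(G,H,K)$ is slim by the corollary characterizing slimness, and the interchange law is then automatic: both iterated composites of a $2\times2$ array have the same boundary, hence coincide. To see it is a double groupoid I would exhibit inverses directly: the horizontal inverse of $(h,k,h',k')$ is $(h^{-1},k',h'^{-1},k)$, valid because $hk'=kh'$ rearranges to $h^{-1}k=k'h'^{-1}$, and its horizontal composite with the original is an identity square; the vertical inverse is symmetric. Thus $\Gamma(G,H,K)\in\Cat{slimDblGpd}$.

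Next I would show $\Gamma(f)$ is a double functor. It acts by $f$ on objects, by $f|_H$ on horizontal arrows (well defined since $f(H)\subset H'$) and by $f|_K$ on vertical arrows (since $f(K)\subset K'$), and sends the square $(h,k,h',k')$ to $(f(h),f(k),f(h'),f(k'))$; the image is a genuine square because $f(h)f(k')=f(hk')=f(kh')=f(k)f(h')$, $f$ being a functor. Preservation of sources, targets, identities and of both compositions is then immediate from functoriality of $f$ together with the componentwise description of the pasted squares above. Finally, functoriality of $\Gamma$ is formal: $\Gamma(\mathrm{id})$ fixes every edge and square, while $\Gamma(fg)$ and $\Gamma(f)\Gamma(g)$ agree on objects, edges and squares since the three underlying groupoid maps all compose componentwise in diagrammatic order.

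I expect the only genuine work to be the pasting-closure computation and the bookkeeping of the double-category axioms; everything downstream (interchange, uniqueness of squares per boundary, and functoriality in $f$) is forced by slimness and by $f$ being a homomorphism preserving the relation, so the main obstacle is organizational rather than conceptual.
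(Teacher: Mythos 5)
Your proposal is correct and follows essentially the same route as the paper: the same associativity computation in $G$ (your $h_1h_2k_2=h_1k_1h_2'=kh_1'h_2'$ is the paper's $hjk''=hk'j'=kh'j'$ with different letters), the same identity squares, and the same observation that slimness reduces functoriality in $f$ to restriction of the groupoid functor to the boundaries. You are in fact somewhat more complete than the paper, which only verifies the double \emph{category} axioms and leaves the inverses, the interchange law, and slimness itself implicit; your explicit horizontal inverse $(h^{-1},k',h'^{-1},k)$ and the remark that interchange is automatic in a slim double groupoid fill those gaps.
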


\begin{proof}First let's prove that $\Gamma(G,H,K)$ is a double category. Consider the squares $hk'=kh'$ and $jk''=k'j'$, then the horizontal composition of the two is the square $(hj)k''=k(h'j')$, which is well defined since $hjk''=hk'j'=kh'j'$. The vertical composition is defined similarly by $(hk'=kh')\circ_v(h'k'''=k''h'')=h(k'k''')=(kk'')h''$ and associativity follows from associativity in G. Identities are given by $he=eh$ and $ek=ke$ where $e$ is the identity of G. Now since the targets of $\Gamma$ are slim, functors between them are defined by their values on the boundaries, which are given by functors. $\Gamma$ then sends a functor to its restrictions on the given subgroupoids, which is done functorially. 
\end{proof}

\begin{Lem}The following are true :
	\begin{itemize}
	\item $\Gamma(G,H,K)$ is exclusive if and only if $H\cap K$ is discrete.\\
	\item $\Gamma(G,H,K)$ is maximal if and only if $HK=KH$
	\end{itemize}
\end{Lem}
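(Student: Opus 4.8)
The plan is to unwind what a square in $\Gamma(G,H,K)$ actually is, identify its core bundle and core groupoid concretely, and then read off both equivalences. A square with sources $(h,k)$ and targets $(h',k')$ is precisely a commuting square in $G$ whose two $H$-edges are $h,h'$, whose two $K$-edges are $k,k'$, and which satisfies $hk'=kh'$ (composition in diagrammatic order). Arranging the four corners as a common source $a$, two intermediate corners $b=t(h)=s(k')$ and $c=t(k)=s(h')$, and target $d$, the defining relation $hk'=kh'$ is exactly commutativity of this square in $G$. Since $\Gamma(G,H,K)$ is slim (by the preceding lemma it lands in $\Cat{slimDblGpd}$), the slim characterization gives at most one square per boundary, so the core bundle consists only of identity squares.

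For the first equivalence I would compute the core groupoid directly. Its elements are the squares whose targets are identities, i.e. $h'=k'=\mathrm{id}$; substituting into $hk'=kh'$ forces $h=k$, so such a square is recorded by a single arrow $h=k$ lying in both $H$ and $K$. Hence the core groupoid is canonically isomorphic to $H\cap K$. Because the core bundle is trivial (only identities) and is always contained in the core groupoid, exclusivity---equality of the two---amounts to the core groupoid being trivial, i.e. to $H\cap K$ containing only identities, which is exactly the statement that $H\cap K$ is discrete.

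For the second equivalence I would treat maximality as a filling problem. A target pair is a pair $(h',k')$ with $h'\in H$, $k'\in K$ sharing a common target object $d$; maximality asks that every such pair extend to a square, i.e. that there exist $h\in H$ and $k\in K$ with $hk'=kh'$. Rewriting this relation as $k'(h')^{-1}=h^{-1}k$ shows that the required square exists if and only if $k'(h')^{-1}\in HK$. As $(h',k')$ ranges over all target pairs, the element $k'(h')^{-1}$ ranges over all composable products in $KH$, so maximality is equivalent to $KH\subseteq HK$. Finally, applying inversion and using that $H$ and $K$ are closed under inverses promotes $KH\subseteq HK$ to $HK\subseteq KH$, giving $HK=KH$; the converse inclusion is immediate, so maximality is equivalent to $HK=KH$.

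The routine parts are the bookkeeping of sources and targets and checking composability at each corner, which I would verify but not belabor. The main obstacle is getting the object-level matching right in the maximality argument: one must confirm that, given a factorization $k'(h')^{-1}=h^{-1}k$ with $h\in H$ and $k\in K$, the arrows $h,k$ genuinely close up into a square with the prescribed corners $b,c,d$ and a well-defined fourth corner $a=s(h)=s(k)$, rather than merely satisfying an algebraic identity in $G$. Keeping the groupoid (as opposed to group) composability constraints straight throughout is where the care is needed.
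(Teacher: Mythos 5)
Your proposal is correct and follows essentially the same route as the paper: identify the core groupoid with $H\cap K$ via squares with identity targets (forcing $h=k$), and translate the square-filling condition for a target pair into membership of $k'(h')^{-1}$ in $HK$, hence into $KH\subseteq HK$ and, by inversion, $HK=KH$. If anything, your version is slightly more careful than the paper's own terse argument, in that you work with target pairs as the definition of maximality literally requires and make the inversion step explicit.
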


\begin{proof}
	The core groupoid of $\Gamma(G,H,K)$ is composed of squares $he=ke$, hence it is $H\cap K$. When the latter is discrete, the double groupoid is exclusive and vice versa, which proves the first claim. Now squares exist if they correspond to elements of $HK\cap KH$, then $hk\notin KH$ if ands only if there exist no square $hk=\cdots$. If there exist such a square for every pair $(h,k)$, the double groupoid is maximal.
\end{proof}

\begin{Def}
	Let \Cat{VacDblGpd} be the category of vacant double groupoids and \Cat{2-matchSub} the subcategory of \Cat{2-Sub} that satisfy the above conditions. Then $\Gamma : \Cat{2-matchSub}\to\Cat{VacDblGrp}$ has a right adjoint $\Lambda$
\end{Def}

\begin{proof}
	Let $\tau$ be a vacant double groupoid, $X,Y\in\tau$. Define $\tilde\tau$ to be the groupoid whose objects are objects of $\tau$ and arrows $X:\,a\to b$ are squares $X$ with $s_hs_v(X)=a$ and $t_ht_(X)=b$. Composition is given by the following : if composable, $XY$ is be the unique square in $\tau$ having a barycentric subdivision with X in the upper left and Y in the lower right. The existence of such a square is guaranteed by maximality and its uniqueness by exclusivity and slimness. The same argument guarantees associativity and identities are given by $\imath_a:=\imath_h(\imath_v(a))$.\\
$\Lambda(\tau):=(\tilde\tau,\imath_v(\tau_h),\imath_h(\tau_v))$ defines the functor on objects and $\Lambda(\tau)\in\Cat{2-matchSub}$ since the following is true in $\tau$ :
\begin{align*}	
\begin{bmatrix}\imath&X\\\imath&\imath\end{bmatrix}=\begin{bmatrix}\imath&\imath\\X&\imath\end{bmatrix}
\end{align*}
For a double functor $f:\tau\to \tau'$, the corresponding functor $\Lambda(f)$ is given by $\Lambda(f)(X)=f(x)$. Since double functors preserve identities, $\Lambda(f)\in\Cat{2-matchSub}$.\\
The adjunction is then given by the isomorphism 
\begin{align*}\phi:\Cat{VacDblGpd}(\tau,\Gamma(G))\to\Cat{2-matchSub}(\Lambda(\tau),(G,H,K))
\end{align*}
 such that if $F(X)=(hk'=kh')$ then $\phi(F)(X)=hk'$. To see that it is indeed invertible, recall that if $(G,H,K)\in\Cat{2-matchSub}$, then any element of $HK$ can be uniquely written as an element of $KH$ and therefore correspond to a unique square in $\Gamma{G}$.
\end{proof}

\begin{Lem}Let \Cat{2-Match} be the full subcategory of \Cat{2-matchSub} such that $(G,H,K)\in\Cat{2-Match} \iff G=HK$. Then $\Lambda$'s image is in \Cat{2-Match} and the above adjunction is an equivalence of categories.
\end{Lem}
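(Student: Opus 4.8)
The plan is to prove two things: first, that $\Lambda(\tau)$ always satisfies $\tilde\tau=\imath_v(\tau_h)\,\imath_h(\tau_v)$, so that $\Lambda$ corestricts to a functor $\Cat{VacDblGpd}\to\Cat{2-Match}$; and second, that after restricting $\Gamma$ along the inclusion $\Cat{2-Match}\hookrightarrow\Cat{2-matchSub}$, the unit and counit of the adjunction established above are natural isomorphisms. Note that for $(G,H,K)\in\Cat{2-Match}\subseteq\Cat{2-matchSub}$ one has $H\cap K$ discrete and $HK=KH$, so by the earlier lemma $\Gamma(G,H,K)$ is slim, exclusive and maximal, hence vacant; thus the restricted $\Gamma$ genuinely lands in $\Cat{VacDblGpd}$ and the two functors compose in both orders. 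Since an adjunction is an equivalence exactly when its unit and counit are isomorphisms, this reduces the whole statement to the two verifications below.

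For the image of $\Lambda$ and the counit, the easier half: given a square $X$ of $\tau$ read as a diagonal arrow of $\tilde\tau$, the barycentric subdivision defining the composition of $\tilde\tau$, together with the identity squares on its boundary (the identity recorded in the construction of $\Lambda$), factors $X$ as a horizontal part lying in $\imath_v(\tau_h)$ followed by a vertical part lying in $\imath_h(\tau_v)$; hence $\tilde\tau=\imath_v(\tau_h)\,\imath_h(\tau_v)$ and $\Lambda(\tau)\in\Cat{2-Match}$. For the counit $\epsilon_{(G,H,K)}\colon\Lambda\Gamma(G,H,K)\to(G,H,K)$, an arrow of $\widetilde{\Gamma(G,H,K)}$ is a square, i.e. a relation $hk'=kh'$, and I send it to the element $hk'\in HK$. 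This is a functor whose image is $HK$; on $\Cat{2-Match}$ we have $HK=G$, so it is surjective, and it is injective because in a matched pair each $g\in G$ has a unique factorization $g=hk'$ in $HK$ and a unique factorization $g=kh'$ in $KH$, which together recover the whole square. One checks it carries $\imath_v(\tau_h)$ to $H$ and $\imath_h(\tau_v)$ to $K$, so $\epsilon$ is an isomorphism in $\Cat{2-Match}$, natural in $(G,H,K)$. This is exactly the place where the hypothesis $G=HK$ (rather than mere membership in $\Cat{2-matchSub}$) is indispensable, and it explains why the equivalence must be cut down to $\Cat{2-Match}$.

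The substantial step is the unit $\eta_\tau\colon\tau\to\Gamma\Lambda(\tau)$. A square of $\Gamma\Lambda(\tau)=\Gamma(\tilde\tau,H,K)$ is a quadruple of boundary arrows satisfying $hk'=kh'$ in $\tilde\tau$, and $\eta_\tau$ sends a square of $\tau$ to its boundary. Well-definedness and injectivity on squares are immediate from slimness and exclusivity (at most one square per boundary, and a boundary determined by one arrow of each type), while surjectivity follows from vacancy and maximality: each admissible pair of boundary arrows, one horizontal and one vertical, is realized by a unique square of $\tau$, whose full boundary then satisfies the defining relation. The main obstacle is to verify that this boundary bijection is a \emph{double} functor, i.e. that it intertwines both the horizontal and the vertical compositions of $\tau$ with those of $\Gamma(\tilde\tau,H,K)$; concretely one must show that the single group (diagonal) multiplication of $\tilde\tau$ simultaneously encodes the two pasting operations, which is precisely the point where the matched-pair (Zappa–Szép) identities relating $H$ and $K$ are invoked. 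Granting this compatibility, $\eta_\tau$ is an isomorphism natural in $\tau$, and together with the counit isomorphism it exhibits the adjunction as an equivalence $\Cat{VacDblGpd}\simeq\Cat{2-Match}$.
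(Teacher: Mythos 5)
Your proposal is correct and follows essentially the same route as the paper: it exhibits the same unit (sending a square to the square of $\Gamma(\Lambda(\tau))$ determined by its boundary) and the same counit ($hk'=kh'\mapsto hk'$), and reduces the equivalence to these being isomorphisms, using $G=HK$ exactly where the paper does. In fact you supply more justification than the paper, whose proof ends with the bare assertion that both maps ``are both iso.''
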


\begin{proof} We have already shown that the image is what is claimed, but to show that the adjunction is an equivalence, we need to show that its unit and counit are isomorphisms. Let $\nu:\tau\to\Gamma(\Lambda(\tau))$ and $\eta:\Lambda(\Gamma(G))\to G$ be the adjunction's unit and counit. Then they are given by
\begin{align*}
	\nu(X)=(\begin{bmatrix}\imath&X\\\imath&\imath\end{bmatrix}=\begin{bmatrix}\imath&\imath\\X&\imath\end{bmatrix})\\
	\eta(hk=kh)=hk
\end{align*}
and are both iso.
\end{proof}

This proof can also be found is \cite{Andruskiewitsch2009}, and it shows that vacant double groupoids are in correspondence with factorizations of groupoids. In the group case it means that vacant double groups are in correspondence with bi-crossed products of groups, matched pairs of groups or Zappa-Szep products of groups, as all appear in the literature. We now need to remove the slimness condition.

\begin{Def}Let \Cat{MaxExcl} be the category whose objects are sections of maximal exclusive double groupoids over their frame and whose objects are section preserving functors. Let \Cat{2-semi} be the following category :
\begin{itemize}
	\item objects are quadruples $(G,H,K,A)$, where $(G,H,K)\in\Cat{2-matchSub}$ and A an abelian subgroupoid of G such that $hah^{-1}\in A$, $kak^{-1}\in A$ and $A\cap H=A\cap K$ is discrete.
	\item arrows are functors sending the subgroupoids into their counterparts.
\end{itemize}
\end{Def}

Given such a quadruple, one can build a double groupoid with a canonical section the following way : squares are pairs $(hk'=kh',a)$ and the section is $(hk'=kh')\to(hk'=kh',e)$. Composition is given by 
\begin{align*}
	\bigl(hk'=kh',a\bigr)\circ_h\bigl(h''k''=k'h''',b\bigr)&=\bigl((hh'')k''=k(h'k'''),ahbh^{-1}\bigr)\\
	\bigl(hk'=kh',a\bigr)\circ_v\bigl(h'k'''=k''h'',b\bigr)&=\bigl(h(k'k''')=(kk'')h''),akbk^{-1}\bigr)
\end{align*}
Identities are $(h\imath(b)=\imath(a)h,\imath(\imath{a}))$ and $(\imath(a)k=k\imath(b),\imath(\imath(a)))$ and associativity boils down to 
\begin{align*}
hbh^{-1}(hh'c(hh')^{-1})&=hbh'ch'^{-1}h^{-1}\\
&=h(bh'ch'^{-1})h^{-1}
\end{align*}
and inverses are given by
\begin{align*}
	(hk'=kh',a)^{-h}&=(h^{-1}k=k'h'^{-1},h^{-1}ah)\\
	(hk'=kh',a)^{-v}&=(h'k'^{-1}=k^{-1}h,k^{-1}ak)\\
\end{align*}

An arrow $f\in\Cat{2-semi}$ then defines a double functor by $f(hk'=kh',a)=(f(h)f(k')=f(k)f(h'),f(a))$.

\begin{Lem}Let $\Gamma:\Cat{2-semi}\to\Cat{MaxExcl}$ be the functor defined above, then there exist a functor $\Lambda$ and an isomorphism $\phi$ such that $(\Lambda,\Gamma,\phi)$ is an adjunction. Moreover, restricted to the full subcategory of \Cat{2-semi} whose objects are such that $G=AKH$, the adjunction is an equivalence of categories.
\end{Lem}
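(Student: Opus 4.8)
The plan is to follow the template of the vacant case (the preceding lemma) almost verbatim: build a functor $\Lambda:\Cat{MaxExcl}\to\Cat{2-semi}$, exhibit a natural bijection $\phi$, and then check that the unit and counit become isomorphisms on the subcategory where $G=AKH$. First I would construct $\Lambda$ on objects. Given a maximal exclusive double groupoid $\tau$ together with a section $!$, set $H:=\imath_v(\tau_h)$ and $K:=\imath_h(\tau_v)$, and take $A:=\tau_\bullet$, its core bundle; by the Eckmann--Hilton lemma proven earlier $A$ is an abelian group bundle. For $G$ I would take $\tilde\tau$, the groupoid whose arrows are the squares of $\tau$ composed along the diagonal (barycentric subdivision), exactly as in the vacant construction, the difference being that the filler squares needed to subdivide are now supplied canonically by the section $!$ while the residual core ambiguity is recorded in $A$. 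The decomposition of an arbitrary square into its section-square and a core correction --- which is precisely the element $u_{X,!(\partial X)}$ from the lemma on squares with equal boundary --- is what forces $G=AKH$ by construction.

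Next I would verify that $\Lambda(\tau,!)=(\tilde\tau,H,K,A)$ genuinely lies in $\Cat{2-semi}$. Abelianness of $A$ is already in hand; the stability conditions $hah^{-1}\in A$ and $kak^{-1}\in A$ I would obtain by composing a core square horizontally (resp. vertically) with the identity squares on $h$ and $h^{-1}$ (resp. $k$ and $k^{-1}$) and reading off the result with the interchange law, which also shows the two conjugation actions commute on $A$. The discreteness of $A\cap H=A\cap K$ is exactly the exclusivity hypothesis, since exclusive means the core groupoid equals the core bundle, so a boundary arrow that is also core must be an identity. On morphisms I would set $\Lambda(f)(X)=f(X)$; a section-preserving double functor preserves diagonal composition, the core and the boundaries, so $\Lambda(f)$ is a well-defined arrow of $\Cat{2-semi}$ and $\Lambda$ is functorial.

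For the adjunction I would define $\phi:\Cat{MaxExcl}(\tau,\Gamma(G,H,K,A))\to\Cat{2-semi}(\Lambda\tau,(G,H,K,A))$ by sending a double functor $F$ with $F(X)=(hk'=kh',a)$ to the functor $\phi(F)$ with $\phi(F)(X)=a\,hk'\in G$, the diagonal element; its inverse exists precisely because on the equivalence subcategory every element of $G$ factors uniquely through $AKH$. The unit $\nu:\tau\to\Gamma(\Lambda\tau)$ sends a square to the pair consisting of its boundary together with the core correction $u_{X,!(\partial X)}$, and the counit $\eta:\Lambda(\Gamma(G,H,K,A))\to(G,H,K,A)$ sends $(hk'=kh',a)$ to $a\,hk'$. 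On the subcategory $G=AKH$ the counit is an isomorphism because the unique factorization makes $\eta$ a bijection respecting all four subgroupoids, and the unit is an isomorphism because in a maximal exclusive double groupoid with section every square is uniquely its section-square twisted by a core element (the $u_{X,Y}$ lemma) while maximality guarantees that every admissible boundary is realized; naturality of $\phi$, $\nu$ and $\eta$ is then routine from functoriality of $\Gamma$ and $\Lambda$.

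The hard part will be checking that the diagonal composition defining $G=\tilde\tau$ is well defined and associative in the non-slim setting, and that the core corrections it produces reassemble into exactly the fibered semi-direct product law of $\Gamma$, with its conjugation terms $a\,hbh^{-1}$ and $a\,kbk^{-1}$. The point to nail down is that the section's failure to be multiplicative is absorbed entirely by the abelian core carrying the commuting $H$- and $K$-actions; concretely one must show, by chasing the two barycentric subdivisions against the interchange law, that horizontal-then-vertical and vertical-then-horizontal composition of core corrections agree and reproduce the stated formulas. Once this cocycle bookkeeping is in place, the remaining verifications are the same diagram chases as in the vacant case.
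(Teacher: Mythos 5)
Your proposal follows essentially the same route as the paper's own proof: $\Lambda$ is built from the section-defined diagonal composition $\begin{bmatrix}X&!\\ !&Y\end{bmatrix}$, $\phi$ is given by $F(X)=(hk'=kh',a)\mapsto a\,hk'$, and invertibility rests on the decomposition $X=a\,(!(\Pi(X)))$ from the earlier $u_{X,Y}$ lemma. If anything you are more explicit than the paper, which does not spell out the unit/counit or the verification that $\Lambda(\tau)$ satisfies the $\Cat{2-semi}$ axioms, and which likewise defers the associativity and cocycle bookkeeping you flag as the remaining work.
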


\begin{proof}Given a section $!:\,\blacksquare \tau\to \tau$ of a double groupoid over its frame, maximality and exclusivity ensure that, for given squares X and Y, there exist a unique square with subdivision :
\begin{align*}
\begin{bmatrix}X& !\\!&Y\end{bmatrix}
\end{align*}
Just as before, this defines a composition on the set of squares of $\tau$. When $!$ is a double functor this composition is associative and unital. Moreover it then has inverses given by 
\begin{align*}X^{-1}:=\imath_v(t_h(X)^{-v}) X^{-h} \imath_v(s_h(X)^{-v})=\imath_h(t_v(X)^{-h}) X^{-v} \imath_h(s_v(X)^{-h})
\end{align*}
and hence defines a groupoid.\\
Since arrows in \Cat{MaxExcl} preserve sections, it preserves composition of the above defined groupoid and the process defines a functor $\Lambda$ as claimed. Now 
\begin{align*}
	\phi:\Cat{MaxExcl}(\tau,\Gamma(G))\to\Cat{2-semi}(\Lambda(\tau),G)
\end{align*}
 is given by $\phi(F)(X)=ahk'$ if $F(X)=(hk'=kh',a)$. To see that it is invertible, we need to recall a previous lemma that shows that $X=a(!(\Pi(X)))$ in $\Lambda(\tau)$ since $X$ and $!(\Pi(X))$ have the same boundary.Therefore $f\in\Cat{2-semi}(\Lambda(\tau),G)\iff f(X)\in AHK\subset G$, in which case corresond to a unique square in $\Gamma(G)$, proving the adjunction claim.
Now suppose that $G=AHK$, then maps in $\Cat{2-semi}(G,\Lambda(\tau))$ give uniquely maps in $\Cat{MaxExcl}(\Gamma(G),\tau)$ by $(ahk')\to X$ being sent to $(hk'=kh',a)\to X$. This finishes the proof.
\end{proof}

\bibliography{C:/SeveM/__Science/__Thesis/References}{}
\bibliographystyle{plain}

\end{document}